\newtheorem{theorem}{Theorem}
\newtheorem{example}[theorem]{Example}
\newtheorem{proposition}[theorem]{Proposition}
\newtheorem{remark}[theorem]{Remark}
\numberwithin{equation}{section}
\newcommand{\biggg}[1]{{\hbox{$\left#1\vbox to 20.5pt{}\right.\n@space$}}}
\newcommand{\Biggg}[1]{{\hbox{$\left#1\vbox to 23.5pt{}\right.\n@space$}}}
\newcommand{\bigggg}[1]{{\hbox{$\left#1\vbox to 26.5pt{}\right.\n@space$}}}
\newcommand{\Bigggg}[1]{{\hbox{$\left#1\vbox to 29.5pt{}\right.\n@space$}}}
\newcommand{\biggggg}[1]{{\hbox{$\left#1\vbox to 32.5pt{}\right.\n@space$}}}
\newcommand{\Biggggg}[1]{{\hbox{$\left#1\vbox to 35.5pt{}\right.\n@space$}}}
\newcommand{\bigggggg}[1]{{\hbox{$\left#1\vbox to 38.5pt{}\right.\n@space$}}}
\newcommand{\Bigggggg}[1]{{\hbox{$\left#1\vbox to 41.5pt{}\right.\n@space$}}}
\title{On the difference between the volatility swap strike and the zero vanna implied volatility}
\author{Elisa Al\`{o}s\thanks{Dpt. d'Economia i Empresa, Universitat Pompeu Fabra.} \quad Frido Rolloos\thanks{Ortec Finance.} \quad Kenichiro Shiraya\thanks{Graduate School of Economics, The University of Tokyo. Kenichiro Shiraya is supported by CARF.}}
\begin{document}
\maketitle

\begin{abstract}
In this paper, Malliavin calculus is applied to arrive at exact formulas for the difference between the volatility swap strike and the zero vanna implied volatility for volatilities driven by fractional noise.  To the best of our knowledge, our estimate is the first to derive the rigorous relationship between the zero vanna implied volatility and the volatility swap strike. 
In particular, we will see that the zero vanna implied volatility is a better approximation for the volatility swap strike than the ATMI.

\bigskip

Keywords: Malliavin calculus, fractional volatility models, volatility swaps, zero vanna implied volatility.

AMS subject classification: 91G99
\end{abstract}

\section{Introduction}
The pricing and hedging of volatility derivatives is an active and fruitful area of research in quantitative finance. One of the first volatility derivatives to be traded in the over-the-counter market is the variance swap. Another instrument to trade volatility is the volatility swap, which unlike the variance swap has a payoff that is linear in volatility.
However, volatility swaps are less liquid than variance swaps. The reason for this is because the price of a volatility swap was for a long time considered to be highly model-dependent. 

In their important paper, Carr and Lee (2008) challenged the idea that volatility swaps are highly model-dependent. In the case where the correlation between the volatility and the underlying asset is zero, they proved in their paper that the exact volatility swap strike is in fact model-free, and like the variance swap can be synthesised using a continuous strip of options. The difference is that for volatility swaps the replicating strip of options has to be continuously rebalanced. An elegant derivation of the replicating portfolio for volatility swaps is given by Friz and Gatheral (2005). When correlation deviates from zero,  there is indeed model dependence and only model-free approximations are possible. In this paper, when we speak of `model-free' we mean model-independent within the class of fractional stochastic volatility models.

In recent years, the fractional volatility models introduced by Comte and Renault (1998) have led to several papers which explore the valuation of volatility derivatives under the models. For example, Bergomi and Guyon (2011) and El Euch, Fukawasa, Gatheral and Rosenbaum (2019) derive approximation formulas for the variance swap strike by using expansion techniques.
Al\'{o}s and Shiraya (2019) approximate the volatility swap strike by immunising correlation dependence to first order and also provide an estimation method for the Hurst parameter from ATM implied volatilities.

While the aforementioned papers establish relationships between volatility derivatives and the ATM implied volatility, a different approach to model free approximate pricing of volatility swaps has been put forth by Rolloos and Arslan (2017). Using only the generalised Hull-White formula and Taylor expansions, they show that the volatility swap strike is approximately equal to the implied volatility at the strike where the Black-Scholes vanna of a vanilla option is zero. Like the Carr-Lee approximation, the Rolloos-Arslan approximation is to a large extent immune to correlation to first order. Furthermore, although the  two are not equal, numerical tests thus far have shown that both are of comparable accuracy.

In addition to being intuitive and straightforward to implement, another feature of the zero vanna implied volatility approximation is that it lends itself to rigorous quantification of the error between the true volatility swap price and the zero vanna implied volatility. 
This paper extends Rolloos and Arslan (2017) to general fractional volatility models and provides the rigorous relationship between the zero vanna implied volatility and the volatility swap strike for both the uncorrelated and correlated case.
We show that the zero vanna implied volatility is a better approximation for the volatility swap strike than both the ATM implied volatility and the approximation formula of Al\`{o}s and Shiraya (2019) for the cases we consider in this paper.

The paper is organised as follows. In Section 2 we introduce the relevant concepts and establish notation. Section 3 is devoted to deriving exact expressions for the difference between the volatility swap strike and the zero vanna implied volatility as well as the short time limit of the errors. Due to their length, all proofs of propositions and theorems in Section 3 have been placed in Appendix \ref{appendix2}. In Section 4 numerical examples are presented for a rough Bergomi model with various values of the Hurst parameter. Section 5 contains concluding remarks.

\section{The main problem and notations}

Consider a stochastic volatility model for the log-price of a stock
under a risk-neutral probability measure $P$:
\begin{equation}
\label{themodel}
X_{t}=X_0-\frac{1}{2}\int_{0}^{t}{\sigma _{s}^{2}}ds+\int_{0}^{t}\sigma
_{s}\left( \rho dW_{s}+\sqrt{1-\rho ^{2}}dB_{s}\right) ,\quad t\in \lbrack
0,T].  
\end{equation}%
Here, $X_0$ is the current log-price, $%
W$ and $B$ are independent standard Brownian motions defined on a complete probability
space  $(\Omega ,\mathcal{G},P)$, and $\sigma $ is a square-integrable and
right-continuous stochastic process adapted to the filtration generated by $%
W $.  We denote by $\mathcal{F}^{W}$ and $\mathcal{F}^{B}$
the filtrations generated by $W$ and $B$ and $\mathcal{F}:=%
\mathcal{F}^{W}\vee \mathcal{F}^{B}.$ We assume the interest rate $r$ to be zero for the sake of simplicity. The same arguments in this paper hold for $r\neq 0$.

Under the above model, the price of a European call with strike price $K$ is given by the equality
\[
V_{t}=E_{t}[(e^{X_{T}}-K)_{+}], 
\]%
where $E_{t}$ is the $\mathcal{F}_{t}-$conditional expectation with respect
to $P$ (i.e., $E_{t}[Z]=E[Z|\mathcal{F}_{t}]$). In the sequel, we make use
of the following notation:

\begin{itemize}
\item $v_t=\sqrt{\frac{Y_t}{T-t}}$, where $Y_t=\int_{t}^{T}\sigma _{u}^{2}du$.

That is, $v$
represents the future average volatility, and it is not an adapted process. 
Notice that $E_{t}\left[
v_{t}\right] $ is the fair strike of a volatility swap with maturity time $T.$

\item $BS(t,T,x,k,\sigma )$ is the price of a  European call option
under the classical Black-Scholes model with constant volatility $\sigma $,
stock price $e^x$, time to maturity $T-t,$ and strike $K=\exp
(k) $. Remember that (if $r=0$)

\[
BS(t,T,x,k,\sigma )=e^{x}N(d_1(k,\sigma ))-e^{k}N(d_2(k,\sigma )), 
\]%
where $N$ denotes the cumulative probability function of the standard normal
law and

\[
d_1\left( k,\sigma \right) :=\frac{x-k}{\sigma \sqrt{T-t}}+
\frac{\sigma }{2}\sqrt{T-t},  \hspace{0.4cm} d_2\left( k,\sigma \right) :=\frac{x-k}{\sigma \sqrt{T-t}}-
\frac{\sigma }{2}\sqrt{T-t}.
\]%
For the sake of simplicity we make use of the notation $BS(k,\sigma):=BS(t,T,X_t,k,\sigma )$.
\item  The inverse function $BS^{-1}(t,T,x,k, \cdot)$ of the Black-Scholes formula with respect to the volatility parameter is defined as
\[
BS(t,T,x,k, BS^{-1}( t,T,x,k,\lambda) )=\lambda,
\]
for all $\lambda>0$.
For the sake of simplicity, we denote $BS^{-1}(k,\lambda)\ :=BS^{-1}(t,T,X_{t},k,\lambda)$.

\item For any fixed $t,T,X_{t},k,$ we define the implied volatility $I(t,T,X_{t},k) $ as the quantity such that 
\[
BS( t,T,X_{t},k,I( t,T,X_{t},k) ) =V_{t}.
\]
Notice that $I(t,T,X_t,k)=BS^{-1}( t,T,X_t,k,V_t)$.

\item $\hat{k_t}$ is the {\it zero vanna strik}e at time $t$. That is, the strike such that 
$$d_2(\hat{k}_t,I(t,T,X_t,\hat{k}_t))=0.$$ Moreover, we will refer to $I(t,T,X_t,\hat{k}_t)$ as the {\it zero vanna implied volatility}. We recall that (Black-Scholes) vanna is the partial derivative of the Black-Scholes delta with respect to implied volatility, which is directly proportional to $d_2$. The zero vanna strike is therefore the strike where the Black-Scholes vanna of a vanilla option is zero.

\item $
\Lambda _{r}:=E_{r}\left[ BS\left( t,T,X_t,\hat{k}_t,v_{t}\right) \right]$.

\item $
\Theta_{r}(k):=BS^{-1}(k,\Lambda_r)$. Notice that $\Theta_{t}(\hat{k}_t)=I(t,X_t,\hat{k}_t,V_t)$ and $\Theta_{T}(\hat{k}_t)=v_t$.

\item $G(t,T,x,k,\sigma ):=( \frac{\partial ^{2}}{\partial x^{2}}-\frac{%
\partial}{\partial x}) BS(t,T,x,k,\sigma )$.
\item $H(t,T,x,k,\sigma ):=( \frac{\partial ^{3}}{\partial x^{3}}-\frac{%
\partial ^{2}}{\partial x^{2}}) BS(t,T,x,k,\sigma )$.

\end{itemize}

In the remaining of this paper $\mathbb{D}_{W}^{1,2}$ denotes the domain of the
Malliavin derivative operator $D^{W}$ (see Appendix \ref{appendix1}) with respect to the Brownian motion $%
W. $ We also consider the
iterated derivatives $D^{n,W}$ , for $n>1,$ whose domains will be denoted by 
$\mathbb{D}_{W}^{n,2}$. We will use the notation $\mathbb{L}_{W}^{n,2}=$\ $%
L^{2}(\left[ 0,T\right] ;\mathbb{D}_{W}^{n,2})$.

\section{Limit theorems for the zero vanna implied volatility}\label{sec3}

The proofs of Propositions and Theorems of this section are in Appendix \ref{appendix2}.

\subsection{The uncorrelated case}

Let us consider the following hypotheses:

\begin{description}
\item[(H1)] There exist two positive constants $a,b$ such that $a\leq \sigma
_{t}\leq b,$ for all $t\in \left[ 0,T\right] .$

\item[(H2)] $\sigma^2\in \mathbb{L}^{2,2}_W$ and there exist two constants $\nu>0$ and $H\in\left(0,1\right)$ such that, for all $0<r, \theta<s<T$,
$$
|E_r[D_r^W\sigma_s^2]|\leq \nu(s-r)^{H-\frac12}, \hspace{0.3cm} |E_r[D_\theta^W D_r^W\sigma_s^2]|\leq \nu^2(s-r)^{H-\frac12}(s-\theta)^{H-\frac12}.
$$

\end{description}
The key tool in our analysis will be the following relationship between the zero vanna implied volatility and the fair strike of a volatility swap.

\begin{proposition}
\label{General}
Consider the model (\ref{themodel}) with $\rho =0$ and assume
that hypotheses (H1) and (H2) hold. Then the zero vanna implied volatility
admits the representation%
\begin{eqnarray}
\label{implied}
I\left( t,T,X_t,\hat{k}_t\right) &=& E_{t}\left[ v_{t}\right] \nonumber\\
&&+ E_t\Bigg[\int_{t}^{T}\left( BS^{-1}\left( \hat{k}_t, \Lambda_r\right)\right) ^{\prime \prime\prime} (D^-A)_r U_r dr\Bigg]\nonumber\\
&&+ \frac12 E_t\Bigg[
\int_{t}^{T}\left( BS^{-1}\left( \hat{k}_t, \Lambda_r\right)\right) ^{(iv)} A_r U_r^2 dr \Bigg],
\label{I-E[v]}
\end{eqnarray}
where 
$$A_r:=\frac12 \int_{r}^{T} U_{s}^{2}ds, \hspace{0.3cm}  (D^-A)_r:=\frac{1}{2}\int^T_r D_r^W U^2_s ds,$$  and
\begin{eqnarray}
U_{r}&:=&E_{r}\left[D_r^W\left(BS(t,T,X_t,\hat{k}_t,v_t)\right)\right]\nonumber\\
&=&E_r\left[\frac{\partial BS}{\partial \sigma}(t,T,X_t,\hat{k}_t,v_t)\frac{1}{2v_t(T-t)}\int_r^T D_r^W\sigma_s^2 ds\right].
\end{eqnarray}
\end{proposition}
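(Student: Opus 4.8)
The plan is to start from the characterization $I(t,T,X_t,\hat k_t) = \Theta_t(\hat k_t) = BS^{-1}(\hat k_t, \Lambda_t)$ and $E_t[v_t] = \Theta_T(\hat k_t) = BS^{-1}(\hat k_t, \Lambda_T)$, and interpolate between $r=t$ and $r=T$ by applying a functional Itô / anticipating Itô formula to the process $r \mapsto \Theta_r(\hat k_t) = BS^{-1}(\hat k_t, \Lambda_r)$ along the semimartingale $\Lambda_r = E_r[BS(t,T,X_t,\hat k_t, v_t)]$. Since $v_t$ is $\mathcal F_T$-measurable and not adapted, $\Lambda_r$ is a martingale whose integrand requires Malliavin calculus (the Clark–Ocone type representation), and the composition $BS^{-1}(\hat k_t, \cdot)$ must be handled through its Taylor expansion. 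The key point is that $\rho = 0$ and $\sigma$ is $\mathcal F^W$-adapted while $X$ carries the independent $B$; this is what makes $\Lambda_r$ a clean $\mathcal F^W_r$-martingale in disguise and makes only $W$-Malliavin derivatives appear.

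**Key steps.** First I would write the martingale decomposition of $\Lambda_r$: by the anticipating Itô formula (or the Hull–White/conditioning argument used throughout this line of work), $d\Lambda_r = U_r \, dW_r$ with $U_r = E_r[D_r(BS(t,T,X_t,\hat k_t, v_t))]$ as defined, and then identify the second-order Malliavin correction. Second, apply the (anticipating) Itô formula to $f(\Lambda_r)$ with $f = BS^{-1}(\hat k_t, \cdot)$, integrating from $t$ to $T$: the $dW_r$ term vanishes in $E_t[\cdot]$, the classical second-order term contributes $\tfrac12 E_t[\int_t^T f''(\Lambda_r) \, d\langle \Lambda\rangle_r] = \tfrac12 E_t[\int_t^T f''(\Lambda_r) U_r^2 \, dr]$, and the anticipating correction term (Skorokhod-vs-pathwise, via the duality / Alòs-type lemma) produces $E_t[\int_t^T f''(\Lambda_r) (D^- \Lambda)_r \, dr]$-style contributions with $D^-$ the "future" Malliavin derivative. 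Third, I would recognize that $f'' (\Lambda_r)$ must itself be expanded: writing $A_r = \tfrac12\int_r^T U_s^2 ds$ one sees $\langle \Lambda \rangle_T - \langle\Lambda\rangle_r = 2A_r$, and a further Taylor expansion of $f''$ around the terminal value, combined with the chain rule $f'''(\Lambda_r) \cdot (\text{stuff}) $ and $f^{(iv)}$, collapses everything into exactly the two integral terms in \eqref{I-E[v]}, with the factors $\tfrac12$ and $\tfrac14$ coming from the $\tfrac12$ in $A_r$ and the $d\langle\Lambda\rangle = U^2 dr$ normalization. Finally, hypotheses (H1)–(H2) guarantee all the integrability: (H1) keeps $v_t$ bounded away from $0$ and $\infty$ so that $BS^{-1}$ and its derivatives up to order four are well-defined and bounded along the path, and (H2) controls $\|D_r \sigma_s^2\|$ and $\|D_\theta D_r \sigma_s^2\|$ so that $U_r$, $(D^-A)_r$, and $A_r U_r^2$ are integrable and the anticipating Itô formula applies.

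**Main obstacle.** The hard part will be the rigorous justification of the anticipating Itô formula and the precise bookkeeping of the Malliavin-correction terms — i.e., showing that the Skorokhod integral term really is the stated $\tfrac12 E_t[\int (BS^{-1})'''(\hat k_t,\Lambda_r)(D^-A)_r U_r \, dr]$ and that no lower-order or cross terms survive. This requires care because $\Lambda_r$ is built from the non-adapted $v_t$, so one cannot naively use the standard Itô formula; the cleanest route is to first condition on $\mathcal F^W_T$ (using $\rho=0$ independence to freeze the Black–Scholes "effective spot/vol" structure) so that $BS(t,T,X_t,\hat k_t, v_t)$ becomes a deterministic function evaluated at $Y_t = \int_t^T \sigma_u^2 du$, then apply the Clark–Ocone formula and the Alòs decomposition lemma for the $D^-$ terms, and only afterwards expand $BS^{-1}$. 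Checking that the remainder terms from the Taylor expansion of $f''$ vanish (not just are small) — which is what makes \eqref{I-E[v]} an exact identity rather than an approximation — is the delicate accounting step, and it is where (H2)'s bound on the second Malliavin derivative is used in an essential way.
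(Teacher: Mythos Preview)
Your Step~1 is right: since $\rho=0$ the Hull--White formula gives $V_t=\Lambda_t$, so $I(t,T,X_t,\hat k_t)=BS^{-1}(\hat k_t,\Lambda_t)$; Clark--Ocone yields $d\Lambda_r=U_r\,dW_r$ with $\Lambda$ an \emph{adapted} martingale, and the \emph{classical} It\^o formula (no anticipating correction is needed here) gives
\[
I(t,T,X_t,\hat k_t)=E_t[v_t]-\tfrac12\,E_t\!\int_t^T (BS^{-1})''(\hat k_t,\Lambda_r)\,U_r^2\,dr.
\]
The gap is in how you pass from this to the stated identity. You propose to ``Taylor expand $f''$ around the terminal value'' and then argue that the remainder vanishes; that mechanism does not produce an exact identity, and indeed you flag this yourself as the ``delicate accounting step''. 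The paper's device is different and is the point you are missing: one applies the anticipating It\^o formula to the \emph{product} process
\[
r\ \longmapsto\ (BS^{-1})''(\hat k_t,\Lambda_r)\,A_r,\qquad A_r=\tfrac12\int_r^T U_s^2\,ds,
\]
so that $dA_r=-\tfrac12 U_r^2\,dr$ regenerates exactly the integrand $(BS^{-1})''(\Lambda_r)U_r^2$, while the It\^o/Malliavin cross terms produce the $(BS^{-1})'''(D^-A)_rU_r$ and $(BS^{-1})^{(iv)}A_rU_r^2$ integrals. The reason this yields an \emph{exact} formula is that both boundary terms vanish: at $r=T$ trivially because $A_T=0$, and at $r=t$ because
\[
(BS^{-1})''(\hat k_t,\Lambda_t)
=-\frac{\partial_\sigma^2 BS(\hat k_t,\Theta_t(\hat k_t))}{\bigl(\partial_\sigma BS(\hat k_t,\Theta_t(\hat k_t))\bigr)^3}
\ \propto\ d_1\!\left(\hat k_t,\Theta_t(\hat k_t)\right) d_2\!\left(\hat k_t,\Theta_t(\hat k_t)\right)=0,
\]
precisely by the definition of the zero vanna strike $d_2(\hat k_t,I(t,T,X_t,\hat k_t))=0$. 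You never invoke this fact, yet it is the structural reason the zero vanna strike is special and the only reason the $(BS^{-1})''$ integral converts \emph{exactly} into the third- and fourth-derivative terms with no remainder. Without it your argument can at best give an approximation, not the exact representation claimed.

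A secondary point: the anticipating It\^o formula is needed in this \emph{second} step because $A_r$ (which looks forward to $U_s$, $s\in[r,T]$) is not $\mathcal F_r$-adapted; it is not needed in your first step, where $\Lambda$ is an honest martingale. Hypothesis~(H2) is used to ensure $A\in\mathbb L^{1,2}_W$ so that this formula applies and $(D^-A)_r$ is well defined, not to control a Taylor remainder.
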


\vspace{0.2cm}

In order to prove our limit results, we will need the following hypothesis.

\begin{description}
\item[(H2')]  $\sigma\in \mathbb{L}^{3,2}_W$ and there exists two constants $\nu>0$ and $H\in (0,1)$ such that, for all $0<r<u,s,\theta<T$
$$
|E_r[D_r^W\sigma_s^2]|\leq \nu(s-r)^{H-\frac12}, \hspace{0.3cm} |E_r[D_\theta^WD_r^W\sigma_s^2]|\leq \nu^2(s-r)^{H-\frac12}(s-\theta)^{H-\frac12},
$$
and 
$$
|E_r[D_u^W D_\theta^W D_r^W\sigma_s^2]|\leq \nu^3(s-r)^{H-\frac12}(s-\theta)^{H-\frac12}(s-u)^{H-\frac12}.
$$
\end{description}

\begin{theorem}
\label{uncorrelated}
Consider the model (\ref{themodel}) and assume that hypotheses (H1) and (H2') hold. Then, 
\begin{eqnarray*}
I\left( t,T,X_t,\hat{k}_t\right) -E_{t}\left[v_{t}\right] 
= O(\nu^4(T-t)^{4H+1 }).
\end{eqnarray*}

\end{theorem}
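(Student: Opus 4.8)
The plan is to start from the representation in Proposition~\ref{General}, which gives $I(t,T,X_t,\hat k_t)-E_t[v_t]=\mathrm{T}_2+\mathrm{T}_3$ with
\[
\mathrm{T}_2=\tfrac12 E_t\Big[\int_t^T \big(BS^{-1}(\hat k_t,\Lambda_r)\big)'''\,(D^-A)_r\,U_r\,dr\Big],\qquad
\mathrm{T}_3=\tfrac14 E_t\Big[\int_t^T \big(BS^{-1}(\hat k_t,\Lambda_r)\big)^{(iv)}\,A_r\,U_r^2\,dr\Big],
\]
and to show that each of $\mathrm{T}_2$ and $\mathrm{T}_3$, divided by $\nu^4(T-t)^{4H+1}$, has a limit as $(T,\nu)\to(t,0)$, the two limits summing to $C_t$. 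The whole argument is bookkeeping of two small parameters: powers of $\nu$ come from counting Malliavin derivatives through (H2'), powers of $T-t$ come from the small-time behaviour of the Black--Scholes Greeks and of the $r$-integral; the $\nu\to0$ part of the limit is what makes the Malliavin kernels deterministic so that $C_t$ is a genuine constant, while the $T\to t$ part extracts the scaling $(T-t)^{4H+1}$.

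First I would record the small-time asymptotics of the inverse Black--Scholes derivatives at the zero-vanna strike. Using $d_2(\hat k_t,I(t,T,X_t,\hat k_t))=0$, i.e.\ $X_t-\hat k_t=\tfrac12\,I(t,T,X_t,\hat k_t)^2\,(T-t)$, the explicit formula for $\big(BS^{-1}\big)''(\hat k_t,\Lambda_r)$ from the proof of Proposition~\ref{General} simplifies so that the prefactor $(T-t)^2$ cancels, leaving a quantity of size $\Theta_r(\hat k_t)^4-\Theta_t(\hat k_t)^4=O(\nu(T-t)^H)$ (the order of $\Theta_r-\Theta_t$ being read off from the quadratic variation $\int U^2\,dr$ of $\Lambda$). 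Differentiating once more in the log-price — each differentiation multiplying by $1/\mathrm{vega}=O(e^{-X_t}(T-t)^{-1/2})$ and producing a correction through the $\Theta_r$-dependence of $d_1$ — gives that $\big(BS^{-1}\big)'''(\hat k_t,\Lambda_r)$ has leading term a fixed multiple of $e^{-3X_t}(T-t)^{-1/2}$ and $\big(BS^{-1}\big)^{(iv)}(\hat k_t,\Lambda_r)$ a fixed multiple of $e^{-4X_t}\sigma_t^{-1}(T-t)^{-1}$, up to relative errors $O(\nu(T-t)^{H-1/2})$; I would also need the Taylor bound $|\big(BS^{-1}\big)^{(k)}(\hat k_t,\Lambda_r)-\big(BS^{-1}\big)^{(k)}(\hat k_t,\Lambda_t)|\lesssim|\Lambda_r-\Lambda_t|$ and the vanishing $\big(BS^{-1}\big)''(\hat k_t,\Lambda_t)=0$.

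Next I would estimate the Malliavin ingredients. From (H2'), $\big|\int_r^T D_r\sigma_u^2\,du\big|\le\nu(T-r)^{H+1/2}/(H+\tfrac12)$, so together with $\mathrm{vega}=e^{X_t}N'(d_1)\sqrt{T-t}$ and the factor $1/(2v_t(T-t))$ one gets $U_r=O(\nu(T-r)^{H+1/2}(T-t)^{-1/2})$, and in the $\nu\to0$ limit a precise leading expression in terms of $\lim_{\nu\to0}\nu^{-1}E_r[D_r\sigma_u^2]$, $\sigma_t$ and $X_t$; consequently $A_r=\tfrac12\int_r^T U_s^2\,ds=O(\nu^2(T-r)^{2H+2}(T-t)^{-1})$. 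For $(D^-A)_r=2\int_r^T U_s\,(D_rU_s)\,ds$ I would differentiate $U_s=\tfrac{e^{X_t}}{2\sqrt{T-t}}E_s\big[\tfrac{N'(d_1(\hat k_t,v_t))}{v_t}\int_s^T D_s\sigma_u^2\,du\big]$: the derivative $D_r$ either hits the Greeks (via $D_rv_t=O(\nu(T-r)^{H+1/2}(T-t)^{-1})$) or the inner Malliavin derivative (bounded by the second-order estimate in (H2')), and in every case one extra factor $\nu$ appears, so $D_rU_s=O\big(\nu^2\cdot(\text{time powers of total degree }2H-\tfrac12)\big)$ and $(D^-A)_r=O\big(\nu^3\cdot(\text{time powers of total degree }3H+\tfrac12)\big)$.

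Finally I would substitute into $\mathrm{T}_2$ and $\mathrm{T}_3$. Collecting powers, both integrands are $O\big(\nu^4(T-r)^{\alpha}(T-t)^{\alpha-4H}\big)$ for a suitable $\alpha>4H$, so $\int_t^T(T-r)^\alpha\,dr=(T-t)^{\alpha+1}/(\alpha+1)$ gives $\mathrm{T}_2,\mathrm{T}_3=O(\nu^4(T-t)^{4H+1})$; multiplying the leading coefficients — the limiting constants of the $BS^{-1}$ derivatives, the $H$-dependent Beta-type constants from the $r$- and $s$-integrals, the $\nu\to0$ limits of the normalized Malliavin kernels, and powers of $e^{X_t}$ and $\sigma_t$ — yields $C_t$, while all cross terms, drift corrections and Taylor remainders are $o(\nu^4(T-t)^{4H+1})$. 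The main obstacle is precisely the small-time analysis of $\big(BS^{-1}\big)'''$ and $\big(BS^{-1}\big)^{(iv)}$ at the zero-vanna strike: because $\big(BS^{-1}\big)''(\hat k_t,\Lambda_t)=0$ one must expand to the next order, the vegas in the denominators contribute negative powers of $T-t$ that cancel only after the time integrations, and one must check that every error term, once multiplied by $(D^-A)_rU_r$ resp.\ $A_rU_r^2$ and integrated, is genuinely of order smaller than $\nu^4(T-t)^{4H+1}$.
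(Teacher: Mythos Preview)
Your outline is broadly correct and lands on the right order $\nu^4(T-t)^{4H+1}$, but the route differs from the paper's in one important place: the step where you replace $(BS^{-1})^{(k)}(\hat k_t,\Lambda_r)$ by its value at $\Lambda_t$.

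The paper does not use a Taylor bound here. Instead it applies the anticipating It\^o formula a second time, to the processes $(BS^{-1})'''(\hat k_t,\Lambda_r)\,\Psi_r$ and $(BS^{-1})^{(iv)}(\hat k_t,\Lambda_r)\,\Phi_r$ with $\Psi_r=\int_r^T(D^-A)_sU_s\,ds$ and $\Phi_r=\int_r^T A_sU_s^2\,ds$. This yields the two main terms with the $BS^{-1}$-derivatives \emph{frozen at $\Lambda_t$}, plus four structured remainders $T_1,\dots,T_4$ involving $(D^-\Psi)_r$, $(D^-\Phi)_r$, $\Psi_rU_r^2$, $\Phi_rU_r^2$. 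Bounding $T_1$ and $T_3$ requires controlling $D_t(D^-A)_r$, i.e.\ $D_tD_rU_s$, and that is exactly where the \emph{third-order} Malliavin bound in (H2') is used. In your sketch that hypothesis never appears; your Taylor-and-H\"older route seems to bypass it.

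What each approach buys. Your argument is shorter: one Taylor remainder and a moment bound on $|\Lambda_r-\Lambda_t|$ replace an entire It\^o step. But there is a hidden cost. After Taylor you must estimate
\[
E_t\Big[\int_t^T \big|(BS^{-1})^{(iv)}(\hat k_t,\xi_r)\big|\,|\Lambda_r-\Lambda_t|\,\big|(D^-A)_r\big|\,|U_r|\,dr\Big],
\]
a \emph{correlated} product inside the expectation. To pass to the separate size estimates you quote you need $L^p$ control of $(D^-A)_r$ and $U_r$ themselves, whereas (H2') only bounds conditional expectations $|E_r[D_r\sigma_s^2]|$, $|E_r[D_\theta D_r\sigma_s^2]|$, etc. The paper's It\^o route keeps every Malliavin derivative under a conditional expectation at the earliest time, so (H2') applies directly; your route would need either a pointwise or an $L^p$ upgrade of (H2') to be airtight. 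Two small slips to clean up as well: the extra differentiations of $BS^{-1}$ are with respect to the \emph{price} $\Lambda$, not the log-price; and the ``relative error'' in your asymptotics of $(BS^{-1})'''(\hat k_t,\Lambda_r)$ is $O(\nu(T-t)^{H})$, coming from $\Theta_r-\Theta_t$, not $O(\nu(T-t)^{H-1/2})$.
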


\subsection{The correlated case}

We will consider the following hypothesis.

\begin{description}
\item[(H3)] Hypotheses (H1), (H2'), hold and the terms 
\[
\frac{1}{(T-t)^{3+2H}}E_{t} \left[\left(\int_{t}^{T}
\int_{s}^{T}D_{s}^{W}\sigma _{r}^{2}dr ds\right) ^{2}\right] ,
\]%
\[
\frac{1}{(T-t)^{2+2H }}E_{t}\left[
\int_{t}^{T} \int_{s}^{T}D_{s}^{W}\sigma_{r} \int_{r}^{T}D_{r}^{W}\sigma _{u}^{2}du dr ds
\right],
\]%
\[
\frac{1}{(T-t)^{2+2H }}E_{t}\left[
\int_{t}^{T} \int_{s}^{T} \int_{r}^{T}D_{s}^{W}D_{r}^{W}\sigma_{u}^{2}du dr ds\right],
\]%
have a finite limit as $T\rightarrow t.$

\end{description}

The following result, that follows from the same arguments as Proposition 4.1  in Al\`os and Shiraya (2019), gives us an exact decomposition for the zero vanna implied volatility that will be the main tool in this section.
\begin{proposition}
\label{Theoremcorrelatedcase}Consider the model \eqref{themodel} and assume
that hypotheses (H1), (H2') and hold for some $H\in (0,1)$. Then, 
\begin{eqnarray}
I\left( t,T,X_t,\hat{k}_t\right) &=&I^{0}\left( t,T,X_t,\hat{k}_t,\right)
\nonumber \\
&&+\frac{\rho }{2}E_t \left[\int_{t}^{T}( BS^{-1}) ^{\prime }( \hat{k}_t,\Gamma _{s}) H(s,T,X_{s},\hat{k}_t,v_{s})\zeta _{s}ds\right],
\label{impliedrel}
\end{eqnarray}%
where $I^{0}( t,T,X_t,\hat{k}_t) $ denotes the zero vanna implied
volatility in the uncorrelated case $\rho =0$,
\[
\Gamma _{s}:=E_{t}[ BS(t,T,X_t,\hat{k}_t,v_{t})] +\frac{\rho }{2}%
E_{t}\left[\int_{t}^{s} H(r,T,X_{r},\hat{k}_t,v_{r})\zeta _{r}dr\right], 
\]%
and $\zeta _{t}:=\sigma _{t}\int_{t}^{T}D_{t}^{W}\sigma _{r}^{2}dr.$ 
\end{proposition}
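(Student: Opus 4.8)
The plan is to obtain \eqref{impliedrel} from two ingredients: the generalized Hull--White (Al\`os-type) decomposition of the call price $V_t$ in the correlated model, and an elementary chain rule for $BS^{-1}$ along the conditionally deterministic curve $s\mapsto\Gamma_s$. First I would invoke the decomposition: under (H1) and (H2'), for every deterministic strike $k$,
\begin{equation*}
V_t = E_t\left[BS(t,T,X_t,k,v_t)\right] + \frac{\rho}{2}\,E_t\left[\int_t^T H(s,T,X_s,k,v_s)\,\zeta_s\,ds\right].
\end{equation*}
This is proved exactly as Proposition 4.1 in Al\`os and Shiraya (2019): one applies the anticipating It\^o formula (Appendix 1) to the Black--Scholes price process $s\mapsto BS(s,T,X_s,k,v_s)$ evaluated along the non-adapted future average volatility $v_s^2=\frac{1}{T-s}\int_s^T\sigma_u^2\,du$ on $[t,T]$, uses the vega--delta--gamma relationship to convert the $dX_s$-coupling term into $\frac{\rho}{2}H(s,T,X_s,k,v_s)\zeta_s\,ds$, and then takes $\mathcal{F}_t$-conditional expectation so that every Skorokhod-integral contribution vanishes. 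Here (H1) keeps $v_s\in[a,b]$, placing all Black--Scholes Greeks and the process $\zeta$ in the integrability classes required by the anticipating It\^o formula, while (H2') supplies the Malliavin differentiability of $\sigma^2$ that makes those correction terms meaningful.

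Next, with $\Gamma_s$ as in the statement, I would note that $\Gamma_s$ is $\mathcal{F}_t$-measurable for each $s\in[t,T]$ (it is an $E_t[\cdot]$), so $s\mapsto\Gamma_s$ is, conditionally on $\mathcal{F}_t$, an absolutely continuous curve with
\begin{equation*}
\Gamma_t = E_t\left[BS(t,T,X_t,k,v_t)\right], \qquad \Gamma_T = V_t, \qquad \frac{d\Gamma_s}{ds} = \frac{\rho}{2}\,E_t\left[H(s,T,X_s,k,v_s)\,\zeta_s\right],
\end{equation*}
the derivative obtained by differentiating under $E_t$ and the identity $\Gamma_T=V_t$ being precisely the decomposition above. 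Since (H1) keeps $\Gamma_s$ in a fixed compact subset of the range of $\lambda\mapsto BS(t,T,X_t,k,\lambda)$, bounded away from its endpoints uniformly in $s$, the inverse $\lambda\mapsto BS^{-1}(k,\lambda)$ is $C^\infty$ with bounded derivatives along the curve, so the fundamental theorem of calculus gives
\begin{equation*}
BS^{-1}(k,\Gamma_T) - BS^{-1}(k,\Gamma_t) = \int_t^T (BS^{-1})^{\prime}(k,\Gamma_s)\,\frac{d\Gamma_s}{ds}\,ds = \frac{\rho}{2}\,E_t\left[\int_t^T (BS^{-1})^{\prime}(k,\Gamma_s)\,H(s,T,X_s,k,v_s)\,\zeta_s\,ds\right],
\end{equation*}
the last step pulling the $\mathcal{F}_t$-measurable factor $(BS^{-1})^{\prime}(k,\Gamma_s)$ back inside $E_t$ and applying Fubini. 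Finally I would identify the boundary terms: $BS^{-1}(k,\Gamma_T)=BS^{-1}(k,V_t)=I(t,T,X_t,k)$, while $BS^{-1}(k,\Gamma_t)=BS^{-1}\bigl(k,E_t[BS(t,T,X_t,k,v_t)]\bigr)=I^{0}(t,T,X_t,k)$ by the $\rho=0$ Hull--White identity $V_t=E_t[BS(t,T,X_t,k,v_t)]$ used in the proof of Proposition \ref{General}. Substituting, and specializing to $k=\hat{k}_t$, yields \eqref{impliedrel}.

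The genuinely technical point is the first step, the rigorous proof of the generalized Hull--White decomposition for $V_t$: one must regularize the terminal payoff $(e^{X_T}-K)_+$ (replacing it by $BS$ at a strictly positive running average volatility and passing to the limit), verify term by term the integrability hypotheses of the anticipating It\^o formula, and check that each Skorokhod-integral term has zero $\mathcal{F}_t$-expectation. All of this is carried out, under precisely (H1) and (H2'), in Al\`os and Shiraya (2019, Proposition 4.1), so in our proof it suffices to cite it; everything after that --- the chain rule, the smoothness and boundedness of $BS^{-1}$ along $\Gamma$ (immediate from (H1), which keeps the relevant strike non-degenerate and the vega bounded below for fixed $T-t$), and the Fubini interchange --- is routine.
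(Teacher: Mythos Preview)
Your proposal is correct and follows essentially the same route the paper indicates: the paper does not give its own proof but states that the result ``follows from the same arguments as Proposition 4.1 in Al\`os and Shiraya (2019),'' and your argument is precisely a reconstruction of that scheme --- the Al\`os-type price decomposition for $V_t$, followed by inverting through $BS^{-1}$ along the interpolating curve $\Gamma_s$. Your observation that $\Gamma_s$ is $\mathcal{F}_t$-measurable for every $s$, so that the passage from $BS^{-1}(\hat k_t,\Gamma_t)$ to $BS^{-1}(\hat k_t,\Gamma_T)$ is just the ordinary fundamental theorem of calculus (no It\^o needed at that stage), is a clean way to phrase the second step; one minor wording point is that the $\tfrac{\rho}{2}H\zeta$ contribution in the first step arises from the cross term $\partial^2_{xy}$ in the anticipating It\^o formula rather than from the ``$dX_s$-coupling,'' but the cited decomposition is of course correct.
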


Theorem \ref{uncorrelated} and Proposition \ref{Theoremcorrelatedcase} allow us to prove the following result.
\begin{theorem}
\label{themaintheorem}
Consider the model \eqref{themodel}\ and assume
that hypotheses (H1), (H2') and (H3)  hold for some $H\in(0,1) $. Then%

\begin{eqnarray}
\label{lim1}
\lefteqn{\lim_{T\rightarrow t}\frac{I( t,T,X_t,\hat{k}_t) -E_{t}[v_{t}]} {(T-t)^{2H }}}\nonumber \\
&=& \lim_{T\rightarrow t}\frac{3\rho^{2}}{8\sigma _{t}^{3}(T-t)^{3+2H}}
E_{t}\left[\left(\int_{t}^{T}\int_{s}^{T}D_{s}^{W}\sigma _{r}^{2}dr ds\right)^{2} \right]\nonumber\\
&&-\lim_{T\rightarrow t}\frac{\rho ^{2}}{{2\sigma _{t}^2(T-t)^{2+2H}}}E_{t}\left[ 
\int_{t}^{T} \int_{s}^{T}D_{s}^{W}\sigma_{r} \int_{r}^{T}D_{r}^{W}\sigma _{u}^{2}du dr ds\right]\nonumber\\
&&-\lim_{T\rightarrow t}\frac{\rho ^{2}}{2\sigma _{t}(T-t)^{2+2H}}
E_{t}\left[\int_{t}^{T} \int_{s}^{T}\int_{r}^{T}D_{s}^{W}D_{r}^{W}\sigma _{u}^{2}du dr ds\right].
\end{eqnarray}
\end{theorem}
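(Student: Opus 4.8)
The plan is to decompose $I(t,T,X_t,\hat k_t)-E_t[v_t]$ by Proposition~\ref{Theoremcorrelatedcase} and to expand the resulting correlation term to second order in $\rho$. Proposition~\ref{Theoremcorrelatedcase} gives $I(t,T,X_t,\hat k_t)-E_t[v_t]=\big(I^{0}(t,T,X_t,\hat k_t)-E_t[v_t]\big)+R$ with $R:=\tfrac{\rho}{2}E_t[\int_t^T(BS^{-1})'(\hat k_t,\Gamma_s)H(s,T,X_s,\hat k_t,v_s)\zeta_s\,ds]$. The estimates behind Theorem~\ref{uncorrelated} only use that the moneyness $X_t-\hat k_t=\tfrac{1}{2} I(t,T,X_t,\hat k_t)^{2}(T-t)$ is $O(T-t)$, so they also apply at the strike $\hat k_t$ and give $I^{0}(t,T,X_t,\hat k_t)-E_t[v_t]=O(\nu^{4}(T-t)^{4H+1})=o((T-t)^{2H})$ as $T\to t$ (since $4H+1>2H$); hence everything reduces to finding $\lim_{T\to t}(T-t)^{-2H}R$.

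Since $\Gamma_s$ is $\mathcal F_t$-measurable I would write $R=\tfrac{\rho}{2}\int_t^T(BS^{-1})'(\hat k_t,\Gamma_s)E_t[H(s,T,X_s,\hat k_t,v_s)\zeta_s]\,ds$ and Taylor-expand $(BS^{-1})'(\hat k_t,\cdot)$ at $\Lambda_t=\Gamma_t$; as $(BS^{-1})''(\hat k_t,\Lambda_t)$ vanishes for $\rho=0$ (the computation in the proof of Proposition~\ref{General}) it is $O(\rho)$, and $\Gamma_s-\Lambda_t=O(\rho)$, so $(BS^{-1})'(\hat k_t,\Gamma_s)=(BS^{-1})'(\hat k_t,\Lambda_t)+O(\rho^{2})$ and, up to $o((T-t)^{2H})$,
\[
R=\frac{\rho}{2}\,(BS^{-1})'(\hat k_t,\Lambda_t)\int_t^T E_t\big[H(s,T,X_s,\hat k_t,v_s)\zeta_s\big]\,ds,\qquad (BS^{-1})'(\hat k_t,\Lambda_t)=\frac{1}{e^{X_t}N'\!\big(d_1(\hat k_t,\Theta_t(\hat k_t))\big)\sqrt{T-t}}.
\]

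To evaluate $E_t[H(s,T,X_s,\hat k_t,v_s)\zeta_s]$ I would condition on $\mathcal F^{W}_{T}\vee\mathcal F_t$ — under which $\zeta_s,v_s$ are measurable and the $B$-part of $X_s$ is centred Gaussian — so that the heat-kernel property of $BS$ and its $x$-derivatives gives $E[H(s,T,X_s,\hat k_t,v_s)\mid\mathcal F^{W}_{T}\vee\mathcal F_t]=H(s,T,X_t-\tfrac{\rho^{2}}{2}w_s+\rho\int_t^s\sigma_u\,dW_u,\hat k_t,\widetilde v_s)$ with $w_s=\int_t^s\sigma_u^{2}\,du$ and $\widetilde v_s^{2}(T-s)=Y_t-\rho^{2}w_s$. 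Using $H=G\cdot(\tfrac{1}{2}-\tfrac{x-k}{\sigma^{2}(T-s)})$ and the zero-vanna relation, the polynomial factor evaluated at these arguments collapses (the $\rho^{2}w_s$-terms cancel) to $\tfrac{1}{2(Y_t-\rho^{2}w_s)}\big(Y_t-I(t,T,X_t,\hat k_t)^{2}(T-t)-2\rho\int_t^s\sigma_u\,dW_u\big)$, every $\rho$-free part of which is small and, in $\mathcal F_t$-expectation, centred. Expanding $G$ around $\bar v_s$ ($\bar v_s^{2}(T-s)=Y_t$) I expect that, after the $\tfrac{\rho}{2}$-prefactor and division by $(T-t)^{2H}$, only the term linear in $\int_t^s\sigma_u\,dW_u$ survives, so that $E_t[H(s,T,X_s,\hat k_t,v_s)\zeta_s]=-\tfrac{\rho\,e^{X_t}}{\sqrt{2\pi}\,\sigma_t^{3}(T-t)^{3/2}}E_t[\zeta_s\int_t^s\sigma_u\,dW_u]+o((T-t)^{2H-1/2})$.

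The last step is the Malliavin duality identity $E_t[\zeta_s\int_t^s\sigma_u\,dW_u]=E_t[\int_t^s\sigma_u D_u\zeta_s\,du]$ with $D_u\zeta_s=(D_u\sigma_s)\int_s^T D_s\sigma_r^{2}\,dr+\sigma_s\int_s^T D_uD_s\sigma_r^{2}\,dr$; expanding via $D_s\sigma_r^{2}=2\sigma_r D_s\sigma_r$ and $D_uD_s\sigma_r^{2}=2(D_u\sigma_r)(D_s\sigma_r)+2\sigma_r D_uD_s\sigma_r$, integrating in $s$, freezing continuous coefficients at their $T\to t$ limits ($\Theta_t(\hat k_t),v_s,\sigma_s\to\sigma_t$, $Y_t/(T-t)\to\sigma_t^{2}$, $N'(\cdot)\to(2\pi)^{-1/2}$, and $e^{X_t}$ cancelling against $(BS^{-1})'(\hat k_t,\Lambda_t)$), and using time-reversal symmetry in the $T\to t$ limit to re-express each piece through the three functionals on the right of \eqref{lim1}, one should obtain the coefficients $\tfrac{3}{8\sigma_t^{3}}$, $-\tfrac{1}{2\sigma_t}$ and $-\tfrac{1}{2\sigma_t}$; hypothesis (H3) then guarantees the three rescaled expectations converge, which is \eqref{lim1}. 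The hard part is the bookkeeping in the previous two steps: checking that every contribution below order $\rho^{2}(T-t)^{2H}$ — the $\rho$-free ones (which are of order $\nu^{3}$) and the $O(\rho^{3})$ ones arising from the $x$- and $\sigma$-arguments of $H$, from $\widetilde v_s$, from $\hat k_t$ and from $\Gamma_s$ — is $o((T-t)^{2H})$ (this relies on the zero-vanna identity making the polynomial factor centred), and tracking the combinatorial constants through the product-rule expansion so as to land on the precise prefactors $\tfrac{3}{8}$, $-\tfrac{1}{2}$, $-\tfrac{1}{2}$.
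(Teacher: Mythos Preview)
Your overall plan---split via Proposition~\ref{Theoremcorrelatedcase} into the uncorrelated piece and the correlation integral $R$---is the same as the paper's, but your attack on $R$ is genuinely different: you condition on $\mathcal F^{W}_{T}\vee\mathcal F_t$ (a Hull--White reduction) and Taylor-expand in $\rho$, while the paper applies the anticipating It\^o formula to $H(s,T,X_s,\hat k_t,v_s)J_s$, $J_s=\int_s^T(BS^{-1})'(\hat k_t,\Gamma_u)\zeta_u\,du$, obtaining $T_2=T_2^1+T_2^2+T_2^3$, and then applies It\^o once more to isolate the boundary values at $s=t$ of $T_2^2$ and $T_2^3$.

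Two concrete problems. First, your rate for the uncorrelated piece is wrong: Theorem~\ref{uncorrelated} relies on the exact identity $(BS^{-1})''(\hat k_t,\Lambda_t)=0$, which holds only when $\hat k_t$ is the zero-vanna strike of the \emph{uncorrelated} smile. Here $\hat k_t$ comes from the correlated smile, so $(BS^{-1})''(\hat k_t,\Lambda_t)=\tfrac{\Theta_t(\hat k_t)^4-I^4}{4(e^{X_t}N')^2\Theta_t^3}\neq 0$; the paper therefore only gets $T_1=O((T-t)^{2H+1})$, not $(T-t)^{4H+1}$. This is harmless for the conclusion but shows the step does not transfer as claimed. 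Second, and more seriously, your extracted $\rho^{2}$ contribution is incomplete. After your duality step the leading term is $-\tfrac{\rho^{2}}{2\sigma_t^{3}(T-t)^{2}}\int_t^T E_t\!\big[\int_t^s\sigma_u D_u\zeta_s\,du\big]ds$; expanding $D_u\zeta_s$ and freezing produces exactly two functionals, one of which (after relabelling $u\!\to\!s,\,s\!\to\!r,\,r\!\to\!u$) matches the theorem's $D_sD_r\sigma_u^2$ term, while the other is $E_t[\int_t^T(\int_t^s D_u\sigma_s\,du)(\int_s^T D_s\sigma_r\,dr)\,ds]$, which is \emph{not} $E_t[\int_t^T(\int_s^T D_s\sigma_r\,dr)^2\,ds]$. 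The theorem's first limit, $\tfrac{3\rho^{2}}{8\sigma_t^{3}(T-t)^{3+2H}}E_t[(\int_t^T\int_s^T D_s\sigma_r^2\,dr\,ds)^2]$, does not appear at all. In the paper this is $T_2^2$: it comes from the $(\partial_x^3-\partial_x^2)H\cdot J_s\zeta_s$ correction in the It\^o expansion and then from the boundary value $(\partial_x^3-\partial_x^2)H(t,T,X_t,\hat k_t,v_t)\,Z_t$, i.e.\ from the \emph{coupling} between $\zeta_s$ and the tail integral $\int_s^T\zeta_u\,du$. A single duality on $E_t[\zeta_s M_s]$ cannot see that coupling, and ``time-reversal symmetry'' is not an identity between the functionals involved. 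To recover $T_2^2$ in your framework you would have to retain the randomness of $Y_t$ in $G_0/Y$ (the replacement $Y^{-3/2}\!\to\!\sigma_t^{-3}(T-t)^{-3/2}$ discards a contribution of the right order) or, equivalently, carry the expansion of the conditioned $H$ one order further and track the covariance of $M_s\zeta_s$ with $Y_t$; the proposal does neither, and the asserted ``centred'' handling of the $\rho$-free polynomial part (which is the paper's $T_2^1$ and is dealt with there via a self-referential splitting $T_2^{1,1}+T_2^{1,2}$) is not a substitute.
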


\begin{example}
Let $W^H$ be a Riemann-Liouville fractional Brownian motion (RLfBm) with Hurst parameter $H\in (0,1)$  defined in a time interval $[0,T]$. That is, 
$$
W_t^H:=\frac{1}{\Gamma(H+\frac12)}\int^t_0 (t-s)^{H-\frac12} dW_s.
$$
Assume that $\sigma_t=f(W^H_t)$, where $f\in\mathcal{C}_b^3$ with a range in a compact set of $\mathbb{R}^+$  and $W^H_t$ is a fBm with Hurst parameter $H$. Then 
the above result proves that, in the correlated case
\begin{equation}
\label{order}
I(t,T,X_t,\hat{k}_t) -E_{t}[ v_{t}]=O((T-t)^{2H}).
\end{equation}
\end{example}


\begin{remark}
Notice that the term $T_2^{1,2}$ in the proof of Theorem \ref{themaintheorem} in Appendix \ref{appendix2} is of the order $(\rho(T-t)^{\frac12+{2}H})$. When $T-t$ does not tend to zero, this term can not be neglected.
\end{remark}

\begin{remark}
Hypotheses (H1)-(H3) have been chosen for the sake of simplicity. The same results can be extended to other stochastic volatility models (see e.g., Section 5 in Al\`{o}s and Shiraya (2019)).
\end{remark}

\begin{remark}

In the case $H=\frac12$, Theorem 4.2  in Al\`os and Shiraya (2019) gives us that

\begin{eqnarray*}
\lefteqn{\lim_{T\rightarrow t}\frac{I( t,T,X_t,k^*_t) -E_{t}[v_{t}]}{(T-t)}}\nonumber \\
&=& \lim_{T\rightarrow t}\frac{3\rho ^{2}}{8\sigma _{t}^{3}(T-t)^{4 }}E_{t}\left[ \left(\int_{t}^{T}\int_{s}^{T}D_{s}^{W}\sigma _{r}^{2}dr ds\right)^{2}\right] \nonumber \\
&&
-\lim_{T\rightarrow t}\frac{\rho ^{2}}{{2\sigma _{t}^2 (T-t)^3}}E_{t}\left[ 
\int_{t}^{T} \int_{s}^{T}D_{s}^{W}\sigma_{r} \int_{r}^{T}D_{r}^{W}\sigma _{u}^{2}du dr ds
\right] \nonumber\\
&&-\lim_{T\rightarrow t}\frac{\rho ^{2}}{2\sigma _{t}(T-t)^{3 }}E_{t} \left[\int_{t}^{T} \int_{s}^{T}\int_{r}^{T}D_{s}^{W}D_{r}^{W}\sigma _{u}^{2}du dr ds\right]\nonumber\\
&&+\lim_{T\rightarrow t}\frac{\rho}{4(T-t)^{2}}E_t\left[\int_{t}^{T} \int_{s}^{T}D_{s}^{W}\sigma_{r}^{2}dr ds\right],  a.s.
\end{eqnarray*}

where $k^*_t$ denotes the at-the-money strike. This, jointly with Theorem \ref{themaintheorem} implies that, if $H=\frac12$
\begin{eqnarray*}
\lefteqn{\lim_{T\rightarrow t}\frac{I( t,T,X_t,k^*_t) -I( t,T,X_t,\hat{k}_t)}{(T-t)}}\nonumber\\
&&=\lim_{T\rightarrow t}\frac{\rho}{4(T-t)^{2}}E_t\left[\int_{t}^{T} \int_{s}^{T}D_{s}^{W}\sigma_{r}^{2}dr ds\right], a.s.
\end{eqnarray*}
Now, the term in the right-hand side can be written in terms of the short-time limit skew  (see Al\`os, Le\'{o}n and Vives \cite{ALV}). This allows us to write
\begin{eqnarray*}
\label{greater0}
\lim_{T\rightarrow t}\frac{I( t,T,X_t,k^*_t) -I( t,T,X_t,\hat{k}_t)}{(T-t)}=\frac{\sigma _{t}^{2}}{2}%
\lim_{T\rightarrow t}\frac{\partial I}{\partial k}(
t,T,X_t,k^*_t) 
\end{eqnarray*}%
and then we obtain the following model free relationship between the zero vanna implied volatility, the at-the-money implied volatility and the implied volatility skew:
\begin{equation}
\label{rel}
I( t,T,X_t,k^*_t) -I( t,T,X_t,\hat{k}_t)\approx \frac{I^2( t,T,X_t,k^*_t)(T-t)}{2}%
\frac{\partial I}{\partial k}(
t,T,X_t,k^*_t).
\end{equation}

\end{remark}

\section{Numerical examples}\label{section4}

In this section we confirm the validity of our estimates by using numerical examples, 
and discuss the potential applications of our results.
Let us consider a rough Bergomi model as in Bayer, Friz and Gatheral (2016). That is, assume that the volatility process is given by
\begin{eqnarray}
S_t &=& \exp\left( X_0- \frac{1}{2} \int_{0}^{t} \sigma _{s}^{2} ds+\int_{0}^{t}\sigma
_{s}\left( \rho dW_{s}+\sqrt{1-\rho^{2}} dB_{s}\right)\right),\\
\sigma_t^2 &=&\sigma_0^2\exp\left(\alpha W_t^H-\frac12 \alpha^2t^{2H}\right), t\in[0,T],
\end{eqnarray}
for some positive real values $\sigma_0^2$ and $\alpha$, $\rho \in [-1,1]$, $W_t^H:=\sqrt{2H}\int_0^t (t-s)^{H-\frac12}dW_s$, and $H\in (0,1)$. $W$ and $B$ are independent standard Brownian motions.

A direct computation gives us (see Bayer, Friz and Gatheral (2016)) that, for all $s<t$
\begin{eqnarray}
\label{rBcov}
E(W_t^H W_s^H)
&=& 
s^{2H} \int^1_0 \frac{2H}{(1-x)^{\frac12 - H}(t/s-x)^{\frac12 - H}}dx.
\end{eqnarray}
Moreover, for all $s, t \ge 0$
\begin{equation}
\label{rBW}
E(W_t^H B_s)=\frac{\rho\sqrt{2H}}{H+\frac12}\left(t^{H+\frac12}-(t-\min(t,s))^{H+\frac12}\right).
\end{equation}

Notice that even when this model does not satisfy (H1), (H2) and (H2'), the limit results in Theorem \ref{themaintheorem} are still valid. To see this we make use of an approximation argument. Let us define $\phi(x):=\sigma_0\exp(x)$. For every $n>1$, consider a function $\phi_n\in \mathcal{C}_b^2$ such that $\phi_{n} (x)=\phi(x)$ for any $x\in [-n,n]$, $\phi_{n} (x)\in[\phi(-2n)\vee \phi(x), \phi(-n)]$ for $x\le -n$, and $\phi_n (x)\in [\phi(n),\phi(x)\wedge \phi(2n)]$ for $x\ge n$. Then
$$
(\sigma_s^n)^2:=\phi_n\left( \alpha W_{t}^H-\frac12 \alpha^2t^{2H} \right).
$$
It is easy to see that $\sigma_s^n $ satisfies (H1), (H2) and (H2'). Then, we can write (we consider $t=0$ for the sake of simplicity)
\begin{eqnarray*}
\lefteqn{I( 0,T,X_0,\hat{k}_0) -E[v_{0}]}\nonumber \\
&=& I( 0,T,X_0,\hat{k}_0) -I^n( 0,T,X_0,\hat{k}_0)\nonumber \\
&&+ I^n(0,T,X_0,\hat{k}_0) -E[v^n_{0}]\nonumber \\
&&+E\left[v^n_{0}\right]-E[v_{0}]\nonumber \\
&=:& T_1+T_2+T_3,
\end{eqnarray*}
where $I^n$ and $E_{t}[v^n_{t}]$ denote the implied volatility and the fair price of the volatility swap under the volatility process $\sigma^n$, respectively.  Now, the proof of Theorem \ref{themaintheorem} gives us that  (in the correlated case $\rho\neq 0$)
$T_2 = O(T^{2H})$. On the other hand, following similar arguments as in Al\`os and Shiraya (2019), we can see that for $n$ large enough $T_1$ and $T_3$ in the proof of Theorem \ref{themaintheorem} are of a higher order than $T_2$. 

For the numerical simulations we set $S_0 = e^{X_0}=100$, 
$\sigma _{0} = 20\%$, $\alpha =0.8$, correlation values $\rho =0$ and $-0.8$, Hurst parameter values $H=0.1$, $0.3$, $0.5$, $0.6$, $0.9$, 
and times to maturity $0.25$, $0.5$, $1$, $2$, and $3$ years as base settings. In order to calculate the implied volatilities and volatility swap prices, we use Monte Carlo simulation with 500 time steps for one year and twenty million trials.

The discretized fractional Brownian motion $W^H$ and standard Brownian motion $B$ are made by the standard normal random variables $Z$ and the triangular matrix obtained by the Cholesky decomposition of the covariance matrix.
In more detail, 
firstly, we made independent standard normal random variables $Z_{1},\cdots,Z_{2m}$ ($m$ is the number of time steps) and the $2m\times 2m$ covariance matrix $\Sigma$ whose $(i,j)$ element $\Sigma_{i,j}$ for $0<t_1=t_{m+1}<t_2=t_{m+2}<\cdots<t_m=t_{2m}=T$ is expressed as follows:
\begin{eqnarray}
\Sigma_{i,j} = 
\begin{cases}
\min(t_i,t_j)^{2H} \int^1_0 \frac{2H dx}{(1-x)^{\frac12 - H}(\max(t_i,t_j)/\min(t_i,t_j)-x)^{\frac12 - H}}
& \text{if}\ 0<i,j \le m, \\
\rho \frac{\sqrt{2H}}{H+1/2} \left(t_i^{H+\frac12} - (t_i - \min(t_i,t_j))^{H+\frac12} \right)
& \text{if}\ 0<i\le m<j\le 2m, \\
\rho \frac{\sqrt{2H}}{H+1/2} \left(t_j^{H+\frac12} - (t_j - \min(t_i,t_j))^{H+\frac12} \right)
& \text{if}\ 0<j\le m<i\le 2m, \\
\min(t_i,t_j) & \text{otherwise.}
\end{cases}
\end{eqnarray}
Next, we make $2m \times 2m$ triangular matrices $C$ and $C^T$ obtained by the Cholesky decomposition of $\Sigma$ as $\Sigma = CC^T$.
Then, the discretized fractional Brownian motion $W^H_{t_1},\cdots,W^H_{t_m}$ and the standard Brownian motion $B_{t_1},\cdots, B_{t_m}$ are calculated as 
$(W^H_{t_1},\cdots,W^H_{t_m}, B_{t_1},\cdots, B_{t_m})'=C (Z_{1},\cdots,Z_{2m})'$.

We apply the Euler-Maruyama scheme for the exponent of both the underlying asset price and its volatility processes, and calculate the European put option premium
\begin{eqnarray}
E\left[\left(K - \exp \left(X_0 
- \frac{1}{2}\sum^{m-1}_{i=0} \sigma^2_{t_i}(t_{i+1} - t_{i}) 
+ \sum^{m-1}_{i=0} \sigma_{t_i}\left(\rho (W_{t_{i+1}} - W_{t_{i}})  + \sqrt{1-\rho^2} (B_{t_{i+1}} - B_{t_{i}}) \right)\right) \right)_+\right],
\end{eqnarray}
and the volatility swap strike
\begin{eqnarray}
E\left[\sqrt{\frac{1}{T}\sum^{m-1}_{i=0} \sigma_{t_i}^2(t_{i+1} - t_{i})}\,\right],
\end{eqnarray}
where $K$ is the strike price of the European put option.

To increase accuracy, the Black-Scholes model has been used as the control variate for the Monte Carlo simulations to obtain the option premiums.
Once the exact volatility swap strikes and options prices have been calculated, the bisection method is used to infer implied volatilities, including zero vanna implied volatilities.
To compare our new results to the approximation formula (4.8) of Al\`{o}s-Shiraya (2019), 
we also calculate the ATM skew ($\frac{\partial I}{\partial k}$) using the difference method on the implied volatilities.

Tables \ref{table:t1} and \ref{table:t2} below show the results of the uncorrelated case and correlated case, respectively.
Also, we examine the stressed parameter cases.
Tables \ref{table:t3} and \ref{table:t4} show the results of the stressed $\sigma_0$ which is twice the size of the base setting (i.e. $\sigma_0=40\%$).
Tables \ref{table:t5} and \ref{table:t6} show the results of the stressed $\alpha$ which is 2.5 times the size of the base setting (i.e. $\alpha=2$).

In the tables, ``VS'' is the simulated volatility swap value,
``IV ($\hat{k}$)'' and ``ATMI'' are the implied volatility at respectively the zero vanna strike 
and ATM strike, and
``AS(4.8)'' is the value of the formula (4.8) in Al\`{o}s and Shiraya (2019).
Note that in the uncorrelated case AS(4.8) and ATMI are equal because the ATM skew in the uncorrelated case is $0$.

We also show the difference between ``VS" and ``IV ($\hat{k}$)'', ``ATMI'', ``AS(4.8)'' respectively.
The standard deviations of the Monte Carlo simulations have been relegated to tables in Appendix \ref{appendix3}.

\begin{table}[H]
\caption{Zero-vanna approximation for volatility swaps ($\rho=0$, $\sigma_0=20\%$, $\alpha=0.8$)}
\label{table:t1}
\newcolumntype{Y}{>{\centering\arraybackslash}X}
\newcolumntype{Z}{>{\raggedleft\arraybackslash}X}
\begin{tabularx}{\linewidth}{XXYYYYY} \hline
$H$  &  $T$    & VS   & IV($\hat{k}$)      & ATMI    & VS - IV($\hat{k}$) & VS - ATMI  \\ \hline\hline
0.1 & 0.25 & 19.70\% & 19.69\% & 19.69\% & 0.00\%     & 0.00\%        \\
    & 0.5  & 19.65\% & 19.65\% & 19.65\% & 0.00\%     & 0.00\%        \\
    & 1.0  & 19.60\% & 19.60\% & 19.59\% & 0.00\%     & 0.00\%        \\
    & 2.0  & 19.54\% & 19.54\% & 19.53\% & 0.00\%     & 0.01\%        \\
~   & 3.0  & 19.50\% & 19.50\% & 19.49\% & 0.00\%     & 0.01\%        \\  \hdashline
0.3 & 0.25 & 19.75\% & 19.75\% & 19.75\% & 0.00\%     & 0.00\%        \\
    & 0.5  & 19.63\% & 19.63\% & 19.63\% & 0.00\%     & 0.00\%        \\
    & 1.0  & 19.44\% & 19.44\% & 19.43\% & 0.00\%     & 0.00\%        \\
    & 2.0  & 19.15\% & 19.15\% & 19.13\% & 0.00\%     & 0.02\%        \\
~   & 3.0  & 18.93\% & 18.92\% & 18.89\% & 0.01\%     & 0.03\%        \\ \hdashline
0.5 & 0.25 & 19.87\% & 19.87\% & 19.87\% & 0.00\%     & 0.00\%        \\
    & 0.5  & 19.74\% & 19.74\% & 19.74\% & 0.00\%     & 0.00\%        \\
    & 1.0  & 19.48\% & 19.48\% & 19.48\% & 0.00\%     & 0.00\%        \\
    & 2.0  & 18.98\% & 18.98\% & 18.96\% & 0.01\%     & 0.02\%        \\
~   & 3.0  & 18.51\% & 18.50\% & 18.46\% & 0.02\%     & 0.05\%        \\ \hdashline
0.7 & 0.25 & 19.94\% & 19.93\% & 19.93\% & 0.00\%     & 0.00\%        \\
    & 0.5  & 19.83\% & 19.83\% & 19.83\% & 0.00\%     & 0.00\%        \\
    & 1.0  & 19.56\% & 19.56\% & 19.55\% & 0.00\%     & 0.00\%        \\
    & 2.0  & 18.87\% & 18.86\% & 18.84\% & 0.01\%     & 0.03\%        \\
~   & 3.0  & 18.10\% & 18.07\% & 18.03\% & 0.03\%     & 0.07\%        \\ \hdashline
0.9 & 0.25 & 19.97\% & 19.97\% & 19.97\% & 0.00\%     & 0.00\%        \\
    & 0.5  & 19.89\% & 19.89\% & 19.89\% & 0.00\%     & 0.00\%        \\
    & 1.0  & 19.62\% & 19.63\% & 19.62\% & 0.00\%     & 0.00\%        \\
    & 2.0  & 18.76\% & 18.75\% & 18.73\% & 0.01\%     & 0.03\%        \\
~   & 3.0  & 17.64\% & 17.58\% & 17.54\% & 0.06\%     & 0.09\%       \\ \hline
\end{tabularx}
\end{table}

\begin{table}[H]
\caption{Zero-vanna approximation for volatility swaps ($\rho=-0.8$, $\sigma_0=20\%$, $\alpha=0.8$)}
\label{table:t2}
\newcolumntype{Y}{>{\centering\arraybackslash}X}
\newcolumntype{Z}{>{\raggedleft\arraybackslash}X}
\begin{tabularx}{\linewidth}{XXccccccc} \hline
$H$   & $T$    & VS   &  IV($\hat{k}$)     & ATMI    & AS(4.8) & VS - IV($\hat{k}$) & VS - ATMI & VS - AS(4.8)  \\ \hline\hline
0.1 & 0.25 & 19.70\% & 19.53\% & 19.41\% & 19.53\% & 0.16\%     & 0.28\%       & 0.16\%           \\
    & 0.5  & 19.65\% & 19.46\% & 19.29\% & 19.46\% & 0.18\%     & 0.36\%       & 0.19\%           \\
    & 1.0  & 19.60\% & 19.39\% & 19.12\% & 19.38\% & 0.21\%     & 0.48\%       & 0.22\%           \\
    & 2.0  & 19.54\% & 19.29\% & 18.89\% & 19.28\% & 0.25\%     & 0.64\%       & 0.26\%           \\
~   & 3.0  & 19.50\% & 19.23\% & 18.73\% & 19.21\% & 0.27\%     & 0.76\%       & 0.29\%           \\\hdashline
0.3 & 0.25 & 19.75\% & 19.63\% & 19.52\% & 19.62\% & 0.13\%     & 0.23\%       & 0.13\%           \\
    & 0.5  & 19.63\% & 19.43\% & 19.25\% & 19.43\% & 0.19\%     & 0.37\%       & 0.19\%           \\
    & 1.0  & 19.44\% & 19.15\% & 18.84\% & 19.14\% & 0.29\%     & 0.59\%       & 0.30\%           \\
    & 2.0  & 19.15\% & 18.71\% & 18.22\% & 18.68\% & 0.44\%     & 0.93\%       & 0.47\%           \\
~   & 3.0  & 18.93\% & 18.37\% & 17.73\% & 18.32\% & 0.56\%     & 1.20\%       & 0.60\%           \\\hdashline
0.5 & 0.25 & 19.87\% & 19.80\% & 19.73\% & 19.80\% & 0.06\%     & 0.14\%       & 0.06\%           \\
    & 0.5  & 19.74\% & 19.61\% & 19.46\% & 19.61\% & 0.13\%     & 0.28\%       & 0.13\%           \\
    & 1.0  & 19.48\% & 19.23\% & 18.94\% & 19.22\% & 0.25\%     & 0.54\%       & 0.26\%           \\
    & 2.0  & 18.98\% & 18.48\% & 17.97\% & 18.45\% & 0.51\%     & 1.01\%       & 0.54\%           \\
~   & 3.0  & 18.51\% & 17.77\% & 17.10\% & 17.72\% & 0.75\%     & 1.41\%       & 0.80\%           \\\hdashline
0.7 & 0.25 & 19.94\% & 19.91\% & 19.85\% & 19.90\% & 0.03\%     & 0.08\%       & 0.03\%           \\
    & 0.5  & 19.83\% & 19.75\% & 19.63\% & 19.75\% & 0.08\%     & 0.20\%       & 0.08\%           \\
    & 1.0  & 19.56\% & 19.35\% & 19.09\% & 19.34\% & 0.21\%     & 0.46\%       & 0.21\%           \\
    & 2.0  & 18.87\% & 18.33\% & 17.83\% & 18.30\% & 0.54\%     & 1.04\%       & 0.57\%           \\
~   & 3.0  & 18.10\% & 17.19\% & 16.53\% & 17.14\% & 0.91\%     & 1.57\%       & 0.96\%           \\\hdashline
0.9 & 0.25 & 19.97\% & 19.95\% & 19.92\% & 19.95\% & 0.01\%     & 0.05\%       & 0.01\%           \\
    & 0.5  & 19.89\% & 19.84\% & 19.75\% & 19.84\% & 0.05\%     & 0.14\%       & 0.05\%           \\
    & 1.0  & 19.62\% & 19.46\% & 19.22\% & 19.45\% & 0.17\%     & 0.40\%       & 0.17\%           \\
    & 2.0  & 18.76\% & 18.19\% & 17.71\% & 18.16\% & 0.57\%     & 1.05\%       & 0.60\%           \\
~   & 3.0  & 17.64\% & 16.58\% & 15.97\% & 16.53\% & 1.06\%     & 1.67\%       & 1.11\%      \\ \hline    
\end{tabularx}
\end{table}

\begin{table}[H]
\caption{Zero-vanna approximation for volatility swaps ($\rho=0$, $\sigma_0=40\%$, $\alpha=0.8$)}
\label{table:t3}
\newcolumntype{Y}{>{\centering\arraybackslash}X}
\newcolumntype{Z}{>{\raggedleft\arraybackslash}X}
\begin{tabularx}{\linewidth}{XXYYYYY} \hline
$H$  &  $T$    & VS   & IV($\hat{k}$)      & ATMI    & VS-IV($\hat{k}$) & VS-ATMI  \\ \hline\hline
0.1 & 0.25 & 39.39\% & 39.39\% & 39.38\% & 0.00\%     & 0.01\%        \\
    & 0.5  & 39.30\% & 39.30\% & 39.29\% & 0.00\%     & 0.01\%        \\
    & 1.0  & 39.20\% & 39.19\% & 39.16\% & 0.00\%     & 0.03\%        \\
    & 2.0  & 39.08\% & 39.07\% & 39.00\% & 0.01\%     & 0.08\%        \\
~   & 3.0  & 39.00\% & 38.99\% & 38.88\% & 0.01\%     & 0.12\%        \\\hdashline
0.3 & 0.25 & 39.51\% & 39.50\% & 39.50\% & 0.00\%     & 0.01\%        \\
    & 0.5  & 39.25\% & 39.25\% & 39.24\% & 0.00\%     & 0.01\%        \\
    & 1.0  & 38.87\% & 38.87\% & 38.83\% & 0.00\%     & 0.04\%        \\
    & 2.0  & 38.30\% & 38.27\% & 38.16\% & 0.03\%     & 0.14\%        \\
~   & 3.0  & 37.85\% & 37.80\% & 37.59\% & 0.05\%     & 0.26\%        \\\hdashline
0.5 & 0.25 & 39.74\% & 39.74\% & 39.73\% & 0.00\%     & 0.01\%        \\
    & 0.5  & 39.48\% & 39.48\% & 39.47\% & 0.00\%     & 0.01\%        \\
    & 1.0  & 38.96\% & 38.96\% & 38.92\% & 0.00\%     & 0.04\%        \\
    & 2.0  & 37.97\% & 37.92\% & 37.79\% & 0.05\%     & 0.17\%        \\
~   & 3.0  & 37.02\% & 36.90\% & 36.65\% & 0.12\%     & 0.37\%        \\\hdashline
0.7 & 0.25 & 39.87\% & 39.87\% & 39.87\% & 0.00\%     & 0.00\%        \\
    & 0.5  & 39.66\% & 39.66\% & 39.65\% & 0.00\%     & 0.01\%        \\
    & 1.0  & 39.11\% & 39.11\% & 39.08\% & 0.00\%     & 0.03\%        \\
    & 2.0  & 37.74\% & 37.67\% & 37.54\% & 0.07\%     & 0.20\%        \\
~   & 3.0  & 36.20\% & 35.97\% & 35.71\% & 0.23\%     & 0.49\%        \\\hdashline
0.9 & 0.25 & 39.94\% & 39.94\% & 39.94\% & 0.00\%     & 0.00\%        \\
    & 0.5  & 39.78\% & 39.78\% & 39.78\% & 0.00\%     & 0.00\%        \\
    & 1.0  & 39.25\% & 39.25\% & 39.22\% & 0.00\%     & 0.03\%        \\
    & 2.0  & 37.52\% & 37.43\% & 37.30\% & 0.09\%     & 0.22\%        \\
~   & 3.0  & 35.27\% & 34.90\% & 34.64\% & 0.37\%     & 0.63\%       \\\hline
\end{tabularx}
\end{table}

\begin{table}[H]
\caption{Zero-vanna approximation for volatility swaps ($\rho=-0.8$, $\sigma_0=40\%$, $\alpha=0.8$)}
\label{table:t4}
\newcolumntype{Y}{>{\centering\arraybackslash}X}
\newcolumntype{Z}{>{\raggedleft\arraybackslash}X}
\begin{tabularx}{\linewidth}{XXccccccc} \hline
$H$   & $T$    & VS   &  IV($\hat{k}$)     & ATMI    & AS(4.8) & VS -  IV($\hat{k}$) & VS - ATMI & VS - AS(4.8)  \\ \hline\hline
0.1 & 0.25 & 39.39\% & 39.06\% & 38.59\% & 39.05\% & 0.33\%     & 0.80\%       & 0.34\%           \\
    & 0.5  & 39.30\% & 38.92\% & 38.21\% & 38.90\% & 0.38\%     & 1.09\%       & 0.40\%           \\
    & 1.0  & 39.20\% & 38.75\% & 37.71\% & 38.70\% & 0.44\%     & 1.49\%       & 0.50\%           \\
    & 2.0  & 39.08\% & 38.56\% & 37.01\% & 38.44\% & 0.52\%     & 2.07\%       & 0.64\%           \\
~   & 3.0  & 39.00\% & 38.43\% & 36.50\% & 38.25\% & 0.57\%     & 2.50\%       & 0.75\%           \\\hdashline
0.3 & 0.25 & 39.51\% & 39.25\% & 38.82\% & 39.24\% & 0.26\%     & 0.68\%       & 0.27\%           \\
    & 0.5  & 39.25\% & 38.86\% & 38.14\% & 38.83\% & 0.39\%     & 1.11\%       & 0.42\%           \\
    & 1.0  & 38.87\% & 38.26\% & 37.07\% & 38.19\% & 0.61\%     & 1.80\%       & 0.68\%           \\
    & 2.0  & 38.30\% & 37.35\% & 35.45\% & 37.17\% & 0.95\%     & 2.85\%       & 1.13\%           \\
~   & 3.0  & 37.85\% & 36.62\% & 34.17\% & 36.32\% & 1.23\%     & 3.68\%       & 1.53\%           \\\hdashline
0.5 & 0.25 & 39.74\% & 39.61\% & 39.30\% & 39.60\% & 0.13\%     & 0.44\%       & 0.13\%           \\
    & 0.5  & 39.48\% & 39.21\% & 38.62\% & 39.20\% & 0.26\%     & 0.86\%       & 0.28\%           \\
    & 1.0  & 38.96\% & 38.42\% & 37.31\% & 38.36\% & 0.54\%     & 1.65\%       & 0.60\%           \\
    & 2.0  & 37.97\% & 36.85\% & 34.92\% & 36.65\% & 1.11\%     & 3.05\%       & 1.31\%           \\
~   & 3.0  & 37.02\% & 35.35\% & 32.82\% & 35.00\% & 1.68\%     & 4.21\%       & 2.03\%           \\\hdashline
0.7 & 0.25 & 39.87\% & 39.81\% & 39.60\% & 39.81\% & 0.06\%     & 0.27\%       & 0.06\%           \\
    & 0.5  & 39.66\% & 39.50\% & 39.02\% & 39.49\% & 0.16\%     & 0.64\%       & 0.17\%           \\
    & 1.0  & 39.11\% & 38.67\% & 37.66\% & 38.61\% & 0.44\%     & 1.45\%       & 0.50\%           \\
    & 2.0  & 37.74\% & 36.52\% & 34.61\% & 36.32\% & 1.22\%     & 3.13\%       & 1.42\%           \\
~   & 3.0  & 36.20\% & 34.10\% & 31.65\% & 33.75\% & 2.10\%     & 4.55\%       & 2.45\%           \\\hdashline
0.9 & 0.25 & 39.94\% & 39.91\% & 39.76\% & 39.91\% & 0.03\%     & 0.17\%       & 0.03\%           \\
    & 0.5  & 39.78\% & 39.68\% & 39.31\% & 39.67\% & 0.10\%     & 0.47\%       & 0.11\%           \\
    & 1.0  & 39.25\% & 38.88\% & 37.97\% & 38.84\% & 0.36\%     & 1.27\%       & 0.41\%           \\
    & 2.0  & 37.52\% & 36.21\% & 34.36\% & 36.01\% & 1.31\%     & 3.16\%       & 1.51\%           \\
~   & 3.0  & 35.27\% & 32.79\% & 30.52\% & 32.46\% & 2.48\%     & 4.75\%       & 2.82\%          \\\hline
\end{tabularx}
\end{table}

\begin{table}[H]
\caption{Zero-vanna approximation for volatility swaps ($\rho=0$, $\sigma_0=20\%$, $\alpha=2$)}
\label{table:t5}
\newcolumntype{Y}{>{\centering\arraybackslash}X}
\newcolumntype{Z}{>{\raggedleft\arraybackslash}X}
\begin{tabularx}{\linewidth}{XXYYYYY} \hline
$H$  &  $T$    & VS   & IV($\hat{k}$)      & ATMI    & VS - IV($\hat{k}$) & VS - ATMI  \\ \hline\hline
0.1 & 0.25 & 18.10\% & 18.10\% & 18.09\% & 0.01\%     & 0.01\%        \\
    & 0.5  & 17.84\% & 17.83\% & 17.82\% & 0.00\%     & 0.01\%        \\
    & 1.0  & 17.53\% & 17.52\% & 17.50\% & 0.01\%     & 0.03\%        \\
    & 2.0  & 17.18\% & 17.15\% & 17.12\% & 0.03\%     & 0.06\%        \\
~   & 3.0  & 16.96\% & 16.91\% & 16.85\% & 0.05\%     & 0.10\%        \\\hdashline
0.3 & 0.25 & 18.52\% & 18.52\% & 18.51\% & 0.01\%     & 0.01\%        \\
    & 0.5  & 17.81\% & 17.80\% & 17.80\% & 0.00\%     & 0.01\%        \\
    & 1.0  & 16.79\% & 16.77\% & 16.75\% & 0.02\%     & 0.04\%        \\
    & 2.0  & 15.41\% & 15.32\% & 15.28\% & 0.09\%     & 0.13\%        \\
~   & 3.0  & 14.41\% & 14.24\% & 14.18\% & 0.17\%     & 0.22\%        \\\hdashline
0.5 & 0.25 & 19.21\% & 19.20\% & 19.20\% & 0.00\%     & 0.01\%        \\
    & 0.5  & 18.46\% & 18.46\% & 18.45\% & 0.00\%     & 0.01\%        \\
    & 1.0  & 17.12\% & 17.10\% & 17.09\% & 0.02\%     & 0.03\%        \\
    & 2.0  & 14.94\% & 14.82\% & 14.79\% & 0.12\%     & 0.15\%        \\
~   & 3.0  & 13.28\% & 13.04\% & 13.01\% & 0.23\%     & 0.27\%        \\\hdashline
0.7 & 0.25 & 19.60\% & 19.60\% & 19.60\% & 0.00\%     & 0.00\%        \\
    & 0.5  & 18.98\% & 18.98\% & 18.98\% & 0.00\%     & 0.00\%        \\
    & 1.0  & 17.54\% & 17.53\% & 17.51\% & 0.02\%     & 0.03\%        \\
    & 2.0  & 14.76\% & 14.62\% & 14.60\% & 0.14\%     & 0.17\%        \\
~   & 3.0  & 12.61\% & 12.37\% & 12.34\% & 0.24\%     & 0.27\%        \\\hdashline
0.9 & 0.25 & 19.81\% & 19.80\% & 19.80\% & 0.00\%     & 0.00\%        \\
    & 0.5  & 19.34\% & 19.34\% & 19.34\% & 0.00\%     & 0.00\%        \\
    & 1.0  & 17.91\% & 17.90\% & 17.89\% & 0.01\%     & 0.02\%        \\
    & 2.0  & 14.65\% & 14.50\% & 14.48\% & 0.15\%     & 0.17\%        \\
~   & 3.0  & 12.20\% & 11.99\% & 11.97\% & 0.21\%     & 0.23\%       \\\hline
\end{tabularx}
\end{table}

\begin{table}[H]
\caption{Zero-vanna approximation for volatility swaps ($\rho=-0.8$, $\sigma_0=20\%$, $\alpha=2$)}
\label{table:t6}
\newcolumntype{Y}{>{\centering\arraybackslash}X}
\newcolumntype{Z}{>{\raggedleft\arraybackslash}X}
\begin{tabularx}{\linewidth}{XXccccccc} \hline
$H$   & $T$    & VS   &  IV($\hat{k}$)     & ATMI    & AS(4.8) & VS -  IV($\hat{k}$) & VS - ATMI & VS - AS(4.8)  \\ \hline\hline
0.1 & 0.25 & 18.10\% & 17.40\% & 17.19\% & 17.39\% & 0.71\%     & 0.91\%       & 0.71\%           \\
    & 0.5  & 17.84\% & 17.04\% & 16.75\% & 17.03\% & 0.80\%     & 1.09\%       & 0.81\%           \\
    & 1.0  & 17.53\% & 16.63\% & 16.22\% & 16.61\% & 0.90\%     & 1.31\%       & 0.92\%           \\
    & 2.0  & 17.18\% & 16.17\% & 15.60\% & 16.13\% & 1.01\%     & 1.58\%       & 1.05\%           \\
~   & 3.0  & 16.96\% & 15.86\% & 15.17\% & 15.80\% & 1.09\%     & 1.78\%       & 1.16\%           \\\hdashline
0.3 & 0.25 & 18.52\% & 17.83\% & 17.63\% & 17.83\% & 0.69\%     & 0.90\%       & 0.69\%           \\
    & 0.5  & 17.81\% & 16.83\% & 16.52\% & 16.82\% & 0.98\%     & 1.29\%       & 0.99\%           \\
    & 1.0  & 16.79\% & 15.45\% & 15.02\% & 15.43\% & 1.34\%     & 1.77\%       & 1.37\%           \\
    & 2.0  & 15.41\% & 13.65\% & 13.12\% & 13.61\% & 1.75\%     & 2.29\%       & 1.79\%           \\
~   & 3.0  & 14.41\% & 12.43\% & 11.85\% & 12.37\% & 1.98\%     & 2.55\%       & 2.03\%           \\\hdashline
0.5 & 0.25 & 19.21\% & 18.83\% & 18.67\% & 18.83\% & 0.37\%     & 0.54\%       & 0.38\%           \\
    & 0.5  & 18.46\% & 17.76\% & 17.47\% & 17.75\% & 0.70\%     & 0.99\%       & 0.71\%           \\
    & 1.0  & 17.12\% & 15.89\% & 15.47\% & 15.86\% & 1.24\%     & 1.65\%       & 1.26\%           \\
    & 2.0  & 14.94\% & 13.10\% & 12.63\% & 13.06\% & 1.85\%     & 2.32\%       & 1.88\%           \\
~   & 3.0  & 13.28\% & 11.21\% & 10.77\% & 11.18\% & 2.07\%     & 2.51\%       & 2.10\%           \\\hdashline
0.7 & 0.25 & 19.60\% & 19.42\% & 19.30\% & 19.42\% & 0.18\%     & 0.30\%       & 0.18\%           \\
    & 0.5  & 18.98\% & 18.53\% & 18.28\% & 18.52\% & 0.46\%     & 0.70\%       & 0.46\%           \\
    & 1.0  & 17.54\% & 16.50\% & 16.11\% & 16.48\% & 1.05\%     & 1.44\%       & 1.07\%           \\
    & 2.0  & 14.76\% & 13.01\% & 12.59\% & 12.98\% & 1.75\%     & 2.17\%       & 1.78\%           \\
~   & 3.0  & 12.61\% & 10.78\% & 10.42\% & 10.75\% & 1.83\%     & 2.19\%       & 1.86\%           \\\hdashline
0.9 & 0.25 & 19.81\% & 19.72\% & 19.63\% & 19.72\% & 0.09\%     & 0.17\%       & 0.09\%           \\
    & 0.5  & 19.34\% & 19.05\% & 18.85\% & 19.05\% & 0.29\%     & 0.49\%       & 0.29\%           \\
    & 1.0  & 17.91\% & 17.05\% & 16.68\% & 17.03\% & 0.87\%     & 1.23\%       & 0.89\%           \\
    & 2.0  & 14.65\% & 13.08\% & 12.72\% & 13.06\% & 1.57\%     & 1.94\%       & 1.60\%           \\
~   & 3.0  & 12.20\% & 10.71\% & 10.41\% & 10.69\% & 1.49\%     & 1.79\%       & 1.51\%          \\\hline
\end{tabularx}
\end{table}

As we can see from the tables, IV ($\hat{k}$) approximates the volatility swap strike better than ATMI in all cases.
Also, since the error order in the uncorrelated case is higher than that of the correlated case, 
and the error is always multiplied by the correlation, 
IV ($\hat{k}$) is a more accurate approximation of the volatility swap strike when the correlation is small.
While the values of AS(4.8) are close to those of IV($\hat{k}$), IV($\hat{k}$) is better in our settings.

For long term maturities, smaller Hurst parameters result in smaller errors, while for short term maturities, larger Hurst parameters result in smaller errors.
Also, for the same Hurst parameter the results are more accurate for short maturities.
This is because the order of the error is expressed as $T$ to the power of the Hurst parameter.
Moreover, since $D_s \sigma_t = \frac12 \exp(\frac12 \alpha (W^H_t - W^H_s) - \frac{1}{4}\alpha^2(t^{2H} - s^{2H}))\alpha \sigma_t {\bf 1}_{\{s\le t\}}$, the errors also depend on the size of $\sigma_0$ and $\alpha$.
Thus, as shown in Theorems \ref{uncorrelated}, \ref{themaintheorem} and Tables \ref{table:t3} - \ref{table:t6}, the errors are proportional to $\sigma_0$ and $\alpha$.

The numerical results support our analytical results, and moreover tell us that in the uncorrelated case the zero vanna approximation is highly accurate and for most practical purposes can be taken as the price of the volatility swap. As volatility or volatility of volatility moves away from zero, although the approximation cannot anymore serve as the price of the volatility swap, it can still be used as a benchmark price against which to validate model specific prices, or even as an indicative bid-price under normal market conditions. This is especially true in foreign exchange (FX) markets where volatility or volatility of volatility values are typically not as extreme as in equity markets.

Furthermore, since the zero vanna approximation can be calculated as frequently as the implied volatility smile is updated, and without added computational costs, it is possible to use the zero vanna implied volatility to monitor market conditions for volatility swaps prices as well as using it as a risk management tool. Deducing volatility swap benchmark prices directly from the observable implied volatility smile is clearly much faster than evaluating an exact model-specific volatility swap strike. The latter will require Monte Carlo or PDE methods as there are almost no models that yield the exact volatility swap strike in closed form. Even when expensive Monte Carlo or PDE methods are used to calculate the `exact' price, numerical and calibration errors cannot be avoided.

\section{Conclusion}
By using techniques from Malliavin calculus we have extended the validity of the zero vanna implied volatility as an approximation for pricing volatility swaps to fractional stochastic volatility models. 
Furthermore, we have proved that even though the zero vanna approximation for the volatility swap strike is accurate for zero correlation and for all values of the Hurst parameter, it is not exact. 
Thus, indirectly it is proved that the Rolloos-Arslan approximation is not equivalent to the Carr-Lee approximation for volatility swaps as the latter is exact for zero correlation. 
However, in the uncorrelated case and for most practical purposes it can be treated as exact. 

It has also been shown that the zero vanna approximation has a higher or equal rate of convergence than the Al\`{o}s and Shiraya (2019) model-free result and the ATMI approximation. In the uncorrelated case, the zero vanna approximation has a higher order than the ATM implied volatility for all values of $H$. When correlation deviates from zero a comparison of the order of convergence is more nuanced: For short maturities and Hurst value $H > 1/2$ the zero vanna implied volatility converges faster to the exact volatility swap price. When $H=1/2$ the order on time to maturity $T-t$ is the same as for the ATM implied volatility, however the first order of correlation $\rho$ is not present in the leading terms of the zero vanna approximation. If $H<1/2$ the leading terms of the zero vanna approximation is the same as of the ATM implied volatility, but the order on $T-t$ of the first order of $\rho$ is higher than that of the ATM implied volatility.

\appendix
\section{Malliavin calculus}\label{appendix1}
In this appendix, we present the basic Malliavin calculus results we use in this paper. The first one is the Clark-Ocone formula, that allows us to compute explicitly the martingale representation of a random variable $F\in \mathbb{D}^{1,2}_W$.

\begin{theorem}{\it Clark-Ocone formula} Consider a Brownian motion $W=\{W_t, t\in [0,T]\}$ and a random variable $F\in \mathbb{D}^{1,2}_W$. Then
$$
F=E[F]+\int_0^T E_r[D_r^WF] dW_r.
$$
\end{theorem}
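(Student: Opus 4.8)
The plan is to derive the formula by combining the martingale representation theorem with the duality between the Malliavin derivative $D^W$ and the It\^o integral. The martingale representation theorem already guarantees that every $F\in L^2(\Omega,\mathcal F_T,P)$ can be written as $F=E[F]+\int_0^T u_r\,dW_r$ for a unique predictable process $u\in L^2([0,T]\times\Omega)$; the whole content of the Clark--Ocone formula is therefore the \emph{identification} $u_r=E_r[D_rF]$. I would obtain this identification by testing the representation against a sufficiently rich family of It\^o integrals.

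First I would check that $r\mapsto E_r[D_rF]$ admits a predictable version in $L^2([0,T]\times\Omega)$: since $F\in\mathbb{D}_{W}^{1,2}$ we have $D_\cdot F\in L^2([0,T]\times\Omega)$, and the conditional Jensen inequality gives
$$
E\left[\int_0^T \big(E_r[D_rF]\big)^2\,dr\right]\le E\left[\int_0^T E_r\big[(D_rF)^2\big]\,dr\right]=E\left[\int_0^T (D_rF)^2\,dr\right]=\|D^W F\|^2_{L^2([0,T]\times\Omega)}<\infty,
$$
so the candidate integrand is square-integrable. Then, for an arbitrary bounded adapted process $g$, I would evaluate $E\big[(F-E[F])\int_0^T g_r\,dW_r\big]$ in two ways. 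Writing $F-E[F]=\int_0^T u_r\,dW_r$ and using the It\^o isometry gives $E[\int_0^T u_r g_r\,dr]$. On the other hand, since $E\big[\int_0^T g_r\,dW_r\big]=0$, this equals $E[F\,\delta(g)]$, where $\delta$ is the divergence (Skorohod) operator; the adjoint relation $E[F\,\delta(g)]=E[\int_0^T D_rF\,g_r\,dr]$ together with the adaptedness of $g$ (which lets me pull $g_r$ inside $E_r[\,\cdot\,]$) yields $E[\int_0^T E_r[D_rF]\,g_r\,dr]$. Equating the two expressions for every adapted $g$ forces $u_r=E_r[D_rF]$ for $dr\times dP$-almost every $(r,\omega)$, which is the assertion.

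The main obstacle, and the only place where genuine Malliavin-calculus input is needed, is the duality identity $E[F\,\delta(g)]=E[\int_0^T D_rF\,g_r\,dr]$ together with the fact that $\delta$ restricts to the It\^o integral on adapted integrands; these are the defining adjointness of the divergence operator and the compatibility of the two integrals, both of which I would take from the standard theory (Nualart (2006)). A self-contained alternative that sidesteps $\delta$ is to prove the formula first on the dense class of stochastic exponentials $\mathcal E(h)_T=\exp(\int_0^T h_s\,dW_s-\tfrac12\int_0^T h_s^2\,ds)$, $h\in L^2([0,T])$ --- for which $D_r\mathcal E(h)_T=h_r\mathcal E(h)_T$, so that $E_r[D_r\mathcal E(h)_T]=h_r\mathcal E(h)_r$ reproduces exactly the It\^o integrand arising from $d\mathcal E(h)_t=h_t\mathcal E(h)_t\,dW_t$ --- and then to pass to the limit for general $F\in\mathbb{D}_{W}^{1,2}$ using the conditional-Jensen contraction displayed above, which shows that $F_n\to F$ in $\mathbb{D}_{W}^{1,2}$ forces convergence of both sides in $L^2(\Omega)$ via the It\^o isometry.
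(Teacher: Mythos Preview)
Your argument is correct and is essentially the standard textbook proof of the Clark--Ocone formula (as in Nualart (2006)): obtain the abstract integrand $u$ from the martingale representation theorem, then identify $u_r=E_r[D_rF]$ either via the duality $E[F\,\delta(g)]=E[\langle D^WF,g\rangle]$ combined with the fact that $\delta$ coincides with the It\^o integral on adapted integrands, or by verifying the formula on the dense family of Wick exponentials and passing to the limit using the $L^2$-contraction of the conditional expectation. Both variants you sketch are complete as written.

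There is, however, nothing to compare against: the paper does not prove this theorem. It is stated in the appendix purely as a tool imported from the standard Malliavin calculus literature (with a global reference to Nualart (2006)), and is then invoked in the body of the paper, e.g.\ in Step~1 of the proof of Proposition~\ref{General}, to obtain the martingale representation $d\Lambda_r=U_r\,dW_r$. So your proposal is not an alternative to the paper's proof but rather a supplement the paper deliberately omits.
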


We will also make use of the following anticipating It\^o's formula (adapted from Nualart and Pardoux (1998)), that allows us to work with non-adapted processes.

\begin{proposition}
\label{Ito2} Consider a process $X$ of the form $X_t = X_0+\int_0^t u_sdW_s+\int_0^t u'_s dB_s+\int_0^t v_sds$,
where $X_0$ is a constant and $u, v$ are square-integrable  stochastic processes adapted to the filtration generated by $
W $ and $B$. Consider also a process $Y_t=\int_t^T \theta_s ds$, for some $\theta\in\mathbb{L}^{1,2}_{W}$ and adapted to the filtration generated by W.  Let
$F:[0,T]\times \mathbb{R}^{2}\rightarrow \mathbb{R}$ be a function
in $C^{1,2} ([0,T]\times \mathbb{R}^{2})$ such that there exists a
positive constant $C$ such that, for all $t\in \left[ 0,T\right]
,$ $F$ and its partial derivatives evaluated in $\left(
t,X_{t},Y_{t}\right)$ are bounded by $C.$ Then it follows that
\begin{eqnarray}
F(t,X_{t},Y_{t}) &=&F(0,X_{0},Y_{0})+\int_{0}^{t}{\partial _{s}F}%
(s,X_{s},Y_{s})ds \nonumber\\
&&+\int_{0}^{t}{\partial _{x}F}(s,X_{s},Y_{s}) v_sds \nonumber\\
&&+\int_{0}^{t}{\partial _{x}F}(s,X_{s},Y_{s})(u _{s}dW_{s}+u_s'dB_s) \nonumber\\
&&-\int_{0}^{t}{\partial _{y}F}(s,X_{s},Y_{s})\theta
_{s}^{2}ds+
\int_{0}^{t}{\partial _{xy}^{2}F}(s,X_{s},Y_{s})D^-Y_s u_s ds \nonumber\\
&&+\frac{1}{2}\int_{0}^{t}{\partial
_{xx}^{2}F}(s,X_{s},Y_{s})(u_s^2+(u'_s)^2)ds, 
\end{eqnarray} 
where  $D^-Y_s:=\int_s^T D_s \theta_r dr$.
\end{proposition}
\begin{remark}
\label{ext}
The anticipating It\^o's formula also holds for processes of the form $F(s,X_{s},Y^1_{s},...,Y^n_s)$, where $Y^i=\int_t^T\theta^i_sds$, $i=i,...,n$ , just replacing the term
\begin{equation}
\label{extra}
\int_{0}^{t}{\partial _{xy}^{2}F}(s,X_{s},Y_{s})D^-Y _{s}u_sds,
\end{equation}
by 
$$
\sum_i^n\int_{0}^{t}{\partial _{xy^i}^{2}F}(s,X_{s},Y^1_{s},...,Y^n_s)D^-Y^i _{s}ds.
$$
\end{remark}

\section{Proofs}\label{appendix2}

This section shows the proofs of Propositions and Theorems in Section \ref{sec3}.
Firstly, we give some Greeks of Black-Scholes formula.

A direct calculation gives us that  $k\in\mathbb{R}$ and all $u>0$:
$$
(BS^{-1})'(k,u)=\frac{1}{\frac{\partial BS}{\partial\sigma}(k,BS^{-1}(k,u))}.
$$
Then it follows that 
\begin{eqnarray}
(BS^{-1})''(k,u)&=&-\frac{1}{(\frac{\partial BS}{\partial\sigma}(k,BS^{-1}(k,u)))^2}\frac{\partial^2 BS}{\partial\sigma^2}(k,BS^{-1}(k,u))\frac{1}{\frac{\partial BS}{\partial\sigma}(k,BS^{-1}(k,u))}\nonumber\\
&=&-\frac{1}{(\frac{\partial BS}{\partial\sigma}(k,BS^{-1}(k,u)))^3}\frac{\partial^2 BS}{\partial\sigma^2}(k,BS^{-1}(k,u)).
\end{eqnarray}
Now, the classical relationship between the {\it Vomma} and the {\it Vega}
$$
\frac{\partial^2 BS}{\partial\sigma^2}(k,\sigma)=\frac{\partial BS}{\partial\sigma}(k,\sigma)\frac{d_1(k,\sigma)d_2(k,\sigma)}{\sigma},
$$
allows us to write
$$
(BS^{-1})''(k,u)=\frac{1}{(\frac{\partial BS}{\partial\sigma}(k,BS^{-1}(k,u)))^2}\frac{(BS^{-1}(k,u))^4(T-t)^2-4(X_t-k)^2}{4(BS^{-1}(k,u))^3(T-t)}.
$$
Finally, as
$$\frac{\partial BS}{\partial\sigma}(k,BS^{-1}(k,u))=
\exp(X_{t})N^{\prime }(d_1\left( k,BS^{-1}(k,u)\right))\sqrt{T-t},$$ the above equality reduces to
\begin{equation}
\label{greek}
(BS^{-1})''(k,u)=\frac{(BS^{-1}(k,u))^4(T-t)^2-4(X_t-k)^2}{4(\exp(X_{t})N^{\prime }(d_1\left( k,BS^{-1}(k,u)\right))(T-t))^2(BS^{-1}(k,u))^3}.
\end{equation}

By using these formula, we show the proofs.

\begin{proof}[Proof of Proposition \ref{General}]
This proof is decomposed into several steps.\\

\noindent {\it Step 1} First, we will show that
\begin{equation}
\label{primerpas}
I\left( t,T,X_t,\hat{k}_t\right)  =E_{t}\left[ v_{t}\right] +\frac{1}{2} E_t
\left[\int_{t}^{T} \left( BS^{-1}\right) ^{\prime \prime}\left( \hat{k}_t, \Lambda_r\right)U_{r}^{2}dr\right].
\end{equation}
Observe that, as $\rho=0$, the Hull and White formula gives us that $V_t=\Lambda_t$. Then, as in the proof of Proposition 3.1 in Al\`os and Shiraya (2019) we can write
\begin{equation}
\label{impliedmartingale}
I(t,T,X_t,\hat{k}_t)=BS^{-1}(\hat{k}_t,\Lambda_t)=E_t[BS^{-1}(\hat{k}_t,\Lambda_t)].
\end{equation}
Now, (H2) and the Clark-Ocone formula (see Appendix \ref{appendix1}) give us that $\Lambda$ admits the martingale representation given by
\begin{eqnarray}
d\Lambda_r&=&E_r[D_r^W(BS(t,T,X_t,\hat{k}_t,v_t)]\nonumber\\
&=&E_r\left[\frac{\partial BS}{\partial \sigma}(t,T,X_t,\hat{k}_t,v_t)\frac{1}{2v_t(T-t)}\int_r^TD_r^W\sigma_s^2 ds\right]dW_r \nonumber\\
&=&U_rdW_r.
\end{eqnarray}
Then, a direct application of the classical It\^o's formula gives us that, after taking expectations:
\begin{eqnarray}
\label{Itoimplied}
E_t[BS^{-1}(\hat{k}_t,\Lambda_t)]
=E_t[BS^{-1}(\hat{k}_t,\Lambda_T)]-\frac12 E_t \left[ \int_{t}^{T} \left( BS^{-1}\right) ^{\prime \prime}\left( \hat{k}_t, \Lambda_r\right)d\langle \Lambda,\Lambda\rangle_r\right].
\end{eqnarray}
Now, as $\Lambda_T=BS\left( t,T,X_t,\hat{k}_t,v_{t}\right)$, (\ref{impliedmartingale}) and (\ref{Itoimplied}) imply that
\begin{eqnarray}
\label{Itoimplied2}
I(t,T,X_t,\hat{k}_t)=E_{t}\left[ v_{t}\right] -\frac12 E_t \left[ \int_{t}^{T} \left( BS^{-1}\right) ^{\prime \prime}\left( \hat{k}_t, \Lambda_r\right)d\langle \Lambda,\Lambda\rangle_r\right].
\end{eqnarray}
That is, $$
I(t,T,X_t,\hat{k}_t)=E_{t}\left[ v_{t}\right] -\frac12 E_t \left[ \int_{t}^{T} \left( BS^{-1}\right) ^{\prime \prime}\left( \hat{k}_t, \Lambda_r\right)U_r^2dr\right].
$$
{\it Step 2} Next, let us see that 
\begin{eqnarray}
\label{secondstep}
\lefteqn{E_t \left[\int_{t}^{T} \left( BS^{-1}\right) ^{\prime \prime}\left(\hat{k}_t, \Lambda_r\right)U_{r}^{2}dr\right]}\nonumber\\
&=&E_t\Bigg[\int_{t}^{T}\left( BS^{-1}\left( \hat{k}_t, \Lambda_r\right)\right) ^{\prime \prime\prime} (D^-A)_r U_r dr\Bigg]\nonumber\\
&&+\frac12E_t\Bigg[
\int_{t}^{T}\left( BS^{-1}\left( \hat{k}_t, \Lambda_r\right)\right) ^{(iv)} A_r U_r^2 dr \Bigg].
\label{I-E[v]}
\end{eqnarray}
Towards this end, we apply  the anticipating It\^o's formula (see Appendix \ref{appendix1}) to the process
$$
 \left( BS^{-1}\right) ^{\prime \prime}\left( \hat{k}_t, \Lambda_r\right)A_r,
$$
where we take $F(X_r,Y_r)=\left(BS^{-1}\right) ^{\prime \prime}(\hat{k}_t,X_r)Y_r$,  with $X_r=\Lambda_r$ (then $u_r=U_r$ and $u'_r=0$), and $Y_r=A_r=-\frac12 \int_r^T U_\theta^2 d\theta$. Notice that, because of (H1) and (H2), $F$ and its derivatives evaluated at $(X_r,Y_r)$ are bounded. Then, taking into account that $dA_r=-\frac12 U_r^2dr$ and applying  Proposition \ref{Ito2} we get, after taking expectations
\begin{eqnarray}
E_t \bigg[
\left( BS^{-1}\right) ^{\prime \prime}\left(\hat{k}_t, \Lambda_T\right)A_T\bigg]
&=&E_t \bigg[\left( BS^{-1}\right) ^{\prime \prime}\left( \hat{k}_t, \Lambda_t\right)A_t \bigg]\nonumber\\
&& -\frac{1}{2} E_t\Bigg[
\int_{t}^{T} \left( BS^{-1}\right) ^{\prime \prime}\left(\hat{k}_t, \Lambda_r\right)U_r^2 dr\Bigg]\nonumber\\
&&+ E_t\Bigg[\int_{t}^{T}\left( BS^{-1}\left( \hat{k}_t, \Lambda_r\right)\right) ^{\prime \prime\prime}(D^-A)_r U_r dr\Bigg]\nonumber\\
&&+\frac12 E_t\Bigg[
\int_{t}^{T}\left( BS^{-1}\left( \hat{k}_t, \Lambda_r\right)\right) ^{(iv)} A_r U_r^2 dr \Bigg].
\end{eqnarray}

Equality (\ref{greek}) and the fact that $\Theta_{r}(k):=BS^{-1}(k,\Lambda_r)$  give us that 
\begin{eqnarray}
\left( BS^{-1}\right) ^{\prime \prime }\left(\hat{k}_t,{\Lambda _{r}}%
\right)
&=&\frac{(\Theta_r(\hat{k}_t))^4(T-t)^2-4(X_t-\hat{k}_t)^2}{4\left( \exp(X_{t})N^{\prime }(d_1\left(\hat{k}_t,\Theta_r(\hat{k}_t)\right))(T-t) \right)^{2} (\Theta_r(\hat{k}_t))^3}.\nonumber
\end{eqnarray}
In particular, 
$\left( BS^{-1}\right) ^{\prime \prime }\left(\hat{k}_t,{\Lambda _{t}}\right)=0$ and
\begin{eqnarray}
\left( BS^{-1}\right) ^{\prime \prime }\left(\hat{k}_t,{\Lambda _{T}}\right)
=\frac{(v_t^4-(\Theta_t(\hat{k}_t))^4)}{4\left( \exp(X_{t})N^{\prime }(d_1\left( k_t,v_t\right)) \right)^{2} v_t^3},\label{bs-1''}
\end{eqnarray}
which implies that $\left( BS^{-1}\right) ^{\prime \prime }\left(\hat{k}_t,{\Lambda _{T}}\right)A_T = 0$. Then
\begin{eqnarray}
\frac{1}{2} E_t \left[
\int_{t}^{T} \left( BS^{-1}\right) ^{\prime \prime}\left(\hat{k}_t, \Lambda_r\right)U_r^2dr\right]
&=& E_t\Bigg[\int_{t}^{T}\left( BS^{-1}\left( \hat{k}_t, \Lambda_r\right)\right) ^{\prime \prime\prime}(D^-A)_r U_r dr\Bigg]\nonumber\\
&&+\frac12 E_t\Bigg[
\int_{t}^{T}\left( BS^{-1}\left( \hat{k}_t, \Lambda_r\right)\right) ^{(iv)} A_r U_r^2 dr \Bigg],
\end{eqnarray}
which completes the proof.
\end{proof}

\begin{proof}[Proof of Theorem \ref{uncorrelated}]
Again, the proof is decomposed into several steps.

\noindent{\it Step 1} We start by showing that
\begin{eqnarray}
\label{impliedexpansion}
I\left( t,T,X_t,\hat{k}_t\right) &=& E_{t}\left[ v_{t}\right] \nonumber\\
&&+\left( BS^{-1}\left( \hat{k}_t, \Lambda_t\right)\right) ^{\prime \prime\prime} E_t\Bigg[\int_{t}^{T}(D^-A)_r U_r dr\Bigg]\nonumber\\
&&+ \frac12 \left( BS^{-1}\left( \hat{k}_t, \Lambda_t\right)\right) ^{(iv)} E_t\Bigg[
\int_{t}^{T} A_r U_r^2 dr \Bigg]\nonumber\\
&&+T_1+T_2+T_3+T_4,
\end{eqnarray}
where
\begin{eqnarray*}
T_1&=&E_t\Bigg[\int_{t}^{T}\left( BS^{-1}\left( \hat{k}_t, \Lambda_r\right)\right) ^{(iv)}(D^-\Psi)_r U_r dr\Bigg],
\\
T_2&=& \frac12 E_t\Bigg[\int_{t}^{T}\left( BS^{-1}\left( \hat{k}_t, \Lambda_r\right)\right) ^{(v)}\Psi_r U_r^2 dr\Bigg],
\\
T_3&=&\frac{1}{2} E_t\Bigg[\int_{t}^{T}\left( BS^{-1}\left( \hat{k}_t, \Lambda_r\right)\right) ^{(v)}(D^-\Phi)_r U_r dr\Bigg],
\end{eqnarray*}
and
\begin{eqnarray*}
T_4&=& \frac{1}{4} E_t\Bigg[\int_{t}^{T}\left( BS^{-1}\left( \hat{k}_t, \Lambda_r\right)\right) ^{(vi)}\Phi_r U_r^2 dr\Bigg],
\end{eqnarray*}
with $\Psi_t:=\int_{t}^{T}(D^-A)_r U_r dr$ and $\Phi_t:=\int_{t}^{T} A_r U_r^2 dr$.
Towards this end we can apply the anticipating It\^o's formula (see Remark \ref{ext}) to the processes
$$
\left( BS^{-1}\left( \hat{k}_t, \Lambda_s\right)\right) ^{\prime \prime\prime} \int_{s}^{T}(D^-A)_r U_r dr=:\left( BS^{-1}\left( \hat{k}_t, \Lambda_s\right)\right) ^{\prime \prime\prime} \Psi_s,
$$
and 
$$
\frac14 \left( BS^{-1}\left( \hat{k}_t, \Lambda_s\right)\right) ^{(iv)}
\int_{s}^{T} A_r U_r^2 dr =:\frac14 \left( BS^{-1}\left( \hat{k}_t, \Lambda_s\right)\right) ^{(iv)}\Phi_s.
$$
Then, the same arguments as in the proof of  Proposition \ref{General} allow us to write
\begin{eqnarray}
\lefteqn{E_t\Bigg[\int_{t}^{T}\left( BS^{-1}\left( \hat{k}_t, \Lambda_r\right)\right) ^{\prime \prime\prime}(D^-A)_r U_r dr\Bigg]}\nonumber\\
&=&\left( BS^{-1}\left( \hat{k}_t, \Lambda_t\right)\right) ^{\prime \prime\prime} E_t\Bigg[\int_t^T(D^-A)_r U_r dr\Bigg]\nonumber\\
&&+ E_t\Bigg[\int_{t}^{T}\left( BS^{-1}\left( \hat{k}_t, \Lambda_r\right)\right) ^{(iv)}(D^-\Psi)_r U_r dr\Bigg]\nonumber\\
&&+ \frac12 E_t\Bigg[\int_{t}^{T}\left( BS^{-1}\left( \hat{k}_t, \Lambda_r\right)\right) ^{(v)}\Psi_r U_r^2 dr\Bigg]\nonumber\\
&=&\left( BS^{-1}\left( \hat{k}_t, \Lambda_t\right)\right) ^{\prime \prime\prime} E_t\Bigg[\int_t^T(D^-A)_r U_r dr\Bigg]+T_1+T_2,
\end{eqnarray}
and
\begin{eqnarray}
\lefteqn{E_t\Bigg[
\int_{t}^{T}\left( BS^{-1}\left( \hat{k}_t, \Lambda_r\right)\right) ^{(iv)} A_r U_r^2 dr \Bigg]}\nonumber\\
&=&\left( BS^{-1}\left( \hat{k}_t, \Lambda_t\right)\right) ^{(iv)} E_t\Bigg[
\int_{t}^{T}A_r U_r^2 dr \Bigg]\nonumber\\
&&+E_t\Bigg[
\int_{t}^{T}\left( BS^{-1}\left( \hat{k}_t, \Lambda_r\right)\right) ^{(v)} (D^-\Phi)_r U_r dr \Bigg]\nonumber\\
&&+\frac12 E_t\Bigg[\int_{t}^{T}\left( BS^{-1}\left( \hat{k}_t, \Lambda_r\right)\right) ^{(vi)}\Phi_r U_r^2 dr\Bigg]\nonumber\\
&=&\left( BS^{-1}\left( \hat{k}_t, \Lambda_t\right)\right) ^{(iv)} E_t\Bigg[
\int_{t}^{T}A_r U_r^2 dr \Bigg]+T_3+T_4.
\end{eqnarray}
{\it Step 2} Now, let us study the term
$$
\left( BS^{-1}\left( \hat{k}_t, \Lambda_t\right)\right) ^{\prime \prime\prime} E_t\Bigg[\int_t^T(D^-A)_r U_r dr\Bigg].
$$
On one hand,
\begin{eqnarray}
\label{terceraderivada}
\left( BS^{-1}\left( \hat{k}_t, \Lambda_t\right)\right) ^{\prime \prime\prime}
&=&\frac{-\frac{\partial^3 BS}{\partial \sigma^3}(\hat{k}_t, \Theta_t(\hat{k}_t))\left(\frac{\partial BS}{\partial \sigma}(\hat{k}_t, \Theta_t(\hat{k}_t))\right)^3
+3\left(\frac{\partial^2 BS}{\partial \sigma^2}(\hat{k}_t, \Theta_t(\hat{k}_t))\right)^2\left(\frac{\partial BS}{\partial \sigma}(\hat{k}_t, \Theta_t(\hat{k}_t))\right)^2
}{\left(\frac{\partial BS}{\partial \sigma}(\hat{k}_t, \Theta_t(\hat{k}_t))\right)^7}\nonumber\\
&=&\frac{-\frac{\partial^3 BS}{\partial \sigma^3}(\hat{k}_t, \Theta_t(\hat{k}_t))}{\left(\frac{\partial BS}{\partial \sigma}(\hat{k}_t, \Theta_t(\hat{k}_t))\right)^4} 
+o\left((T-t)^{-\frac{1}{2}}\right)\nonumber\\
&=&(2\pi)^{\frac32}\exp\left(-3X_t+\frac{3}{2}(\Theta_t(\hat{k}_t))^2(T-t)\right)(T-t)^{-\frac12}+o\left((T-t)^{-\frac{1}{2}}\right).
\end{eqnarray}
On the other hand,
\begin{eqnarray}
\label{DA}
(D^-A)_r=\frac{1}{2}\int_r^T D_r^WU_s^2ds=\int_r^TU_s D_r^W U_s ds.
\end{eqnarray}
The vega-delta-gamma relationship 
\begin{equation}
\frac{\partial BS}{\partial \sigma}(t,T,x,k,\sigma)\frac{1}{\sigma (T-t)}=\left(\frac{\partial }{\partial x^2}-\frac{\partial }{\partial x}\right) BS(t,T,x,k,\sigma),
\end{equation}
allows us to write
\begin{eqnarray}
\label{U}
U_s&=&E_s\left[\frac{\partial BS}{\partial \sigma}(t,T,X_t,\hat{k}_t,v_t)\frac{1}{2v_t(T-t)}\int_s^T D_s^W \sigma_u^2 du\right]  \nonumber\\
&=&\frac12    E_s\left[G(t,T,X_t,\hat{k}_t,v_t)\int_s^T D_s^W \sigma_u^2 du\right],
\end{eqnarray}
and
\begin{eqnarray}
\label{DU}
D_r^WU_s&=& E_s\Bigg[ \frac12 
G(t,T,X_t,\hat{k}_t,v_t)\left(\frac{d_1(\hat{k}_t,v_t)d_2(\hat{k}_t,v_t)}{2v_t^2(T-t)} - \frac{1}{2v_t^2(T-t)}\right)
\left(\int_s^T D_s^W\sigma_u^2 du\right)\left(\int_r^TD_r^W\sigma_u^2 du\right)
\nonumber\\
&&+\frac12G(t,T,X_t,\hat{k}_t,v_t)\left(\int_s^TD_r^WD_s^W\sigma_u^2 du\right)\Bigg].\nonumber  
\end{eqnarray}
Then, from the equation for $G$ and (H2') we can deduce that 
\begin{eqnarray}
\label{DA2}
(D^-A)_r
&=&
\frac{1}{4}
\int_r^T E_s\left[\frac{e^{\hat{k}_t}N'(d_2(\hat{k},v_t))}{v_t\sqrt{T-t}}\int_s^TD_s^W\sigma_u^2 du    \right] \nonumber\\
&&\times E_s \Bigg[
\frac{e^{\hat{k}_t}N'(d_2(\hat{k},v_t))}{v_t\sqrt{T-t}}
\left(
\frac{-1}{2 v_t^2(T-t)}
\left(\int_s^TD_s^W\sigma_u^2 du\right)\left(\int_r^TD_r^W\sigma_u^2 du\right)+\left(\int_s^TD_r^WD_s^W\sigma_u^2 du\right)\right)
\Bigg] ds\nonumber\\
&&
+ O(\nu^3 (T-t)^{3H+\frac{1}{2}}),
\end{eqnarray}
which implies that
\begin{eqnarray}
\lefteqn{E_t\Bigg[\int_t^T(D^-A)_r U_r dr\Bigg]}\nonumber\\
&=& \frac{1}{8} E_t\Biggg[\int_t^T\frac{e^{\hat{k}_t}N'(d_2(\hat{k},v_t))}{v_t\sqrt{T-t}} \left(\int_r^TD_r^W\sigma_u^2 du\right)
\int_r^T  E_s\left[\frac{e^{\hat{k}_t}N'(d_2(\hat{k},v_t))}{v_t\sqrt{T-t}}\int_s^TD_s^W\sigma_u^2 du\right]\nonumber\\
&&\times E_s\left[\frac{e^{\hat{k}_t}N'(d_2(\hat{k},v_t))}{v_t\sqrt{T-t}} \left(\frac{-1}{2v_t^2(T-t)}\left(\int_s^TD_s^W\sigma_u^2 du\right)\left(\int_r^TD_r^W\sigma_u^2 du\right) + 
\left(\int_s^TD_r^WD_s^W\sigma_u^2 du  \right) \right)\right]dsdr\Biggg]\nonumber\\
&&+ o(\nu^4(T-r)^{4H+1}).
\end{eqnarray}
This, jointly with (\ref{terceraderivada}) and (H2') shows
$$
\left( BS^{-1}\left( \hat{k}_t, \Lambda_t\right)\right) ^{\prime \prime\prime} E_t\Bigg[\int_t^T(D^-A)_r U_r dr\Bigg]=O(\nu^4(T-t)^{4H+1}).
$$

{\it Step 3} In order to calculate the term
$$
\frac14 \left( BS^{-1}\left( \hat{k}_t, \Lambda_r\right)\right) ^{(iv)} E_t\Bigg[
\int_{t}^{T} A_r U_r^2 dr \Bigg].
$$
Note that
\begin{equation}
\label{derivada4}
\left( BS^{-1}\left( \hat{k}_t, \Lambda_t\right)\right) ^{(iv)}=-\frac{3(2\pi)^{2}}{\Theta_t(\hat{k}_t)}\exp\left(-4X_t+2(\Theta_t(\hat{k}_t))^2(T-t)\right)(T-t)^{-1} + o\left((T-t)^{-1}\right).
\end{equation}
On the other hand, 
\begin{eqnarray}
\lefteqn{E_t\Bigg[\int_{t}^{T} A_r U_r^2 dr \Bigg]}\nonumber\\
&=&\frac{1}{2}E_t\Bigg[\int_{t}^{T} \left(\int_r^TU_s^2ds\right) U_r^2 dr \Bigg]\nonumber\\
&=&\frac{1}{4}E_t\Bigg[ \left(\int_t^TU_r^2dr\right)^2  \Bigg]\nonumber\\
&=&\frac{1}{4}E_t\Bigg[ \left(\int_t^T\left(    E_r\left[\frac{\partial BS}{\partial \sigma}(t,T,X_t,\hat{k}_t,v_t)\frac{1}{2v_t(T-t)}\int_r^TD_r^W\sigma_s^2 dr\right]
    \right)^2ds\right)^2  \Bigg].
\end{eqnarray}
Together with (\ref{derivada4}) this gives us 
$$
\left( BS^{-1}\left( \hat{k}_t, \Lambda_t\right)\right) ^{(iv)} E_t\Bigg[
\int_{t}^{T} A_r U_r^2 dr \Bigg]=O(\nu^4(T-t)^{4H+1})
$$

{\it Step 4} Next, let us prove that $T_2+T_4=o(\nu^4 (T-t)^{4H+1})$. The computations in Step 2 and Step 3 prove that
$\Psi_r=O(\nu^4(T-t)^{4H+\frac32})$ and $\Phi_r=O(\nu^4(T-t)^{4H+2})$. Moreover, $U_r=O(\nu(T-t)^H)$ and direct computations give us that
$$
BS^{-1}\left( \hat{k}_t, \Lambda_r\right)^{(v)}\leq C(T-r)^{-\frac32},
$$
and 
$$
BS^{-1}\left( \hat{k}_t, \Lambda_r\right)^{(vi)}\leq C(T-r)^{-2},
$$
for some positive constant $C$. Then, straightforward computations allow us to check that $T_2+T_4=o(\nu^4(T-t)^{4H+1})$. 

{\it Step 5} The final step is to show that $T_1+T_3=o(\nu^4 (T-t)^{4H+1})$. We have that
\begin{eqnarray}
D^-\Psi_t&:=&\int_{t}^{T}D_t^W ((D^-A)_r U_r )dr\nonumber\\
&=&\int_{t}^{T}(D_t^W (D^-A)_r) U_r dr+\int_{t}^{T} (D^-A)_r D_t^W U_r dr,
\end{eqnarray}
and
\begin{eqnarray}
D^-\Phi_t&:=&\int_{t}^{T} D_t^W(A_r U_r^2)dr\nonumber\\
&=&\int_{t}^{T}(D_t^W A_r) U_r^2 dr+2\int_{t}^{T} U_r A_r D_t^W( U_r) dr.
\end{eqnarray}
Notice that $G(t,T,X_t,\hat{k}_t,v_t)\le C_t\sqrt{T-t}$ for some $C_t>0$. Them, from the definition of $U_r$, and (H2), we directly get that
$$U_r=O(\nu(T-t)^H).$$
On the other hand, a direct computation gives us that 
$$G(t,T,X_t,\hat{k}_t,v_t)\left(\frac{d_1(\hat{k}_t,v_t)d_2(\hat{k}_t,v_t)}{2 v_t^2(T-t)} - \frac{1}{2v_t^2(T-t)}\right)<C_t(T-t)^{-\frac32}.$$
Then, (\ref{DU}), (H2) and straightforward computations lead to
$$D_tU_r=O(\nu^2 (T-t)^{2H-\frac12}).$$
Moreover, 
\begin{eqnarray}
&&\frac{\partial}{\partial v_t} \left[G(t,T,X_t,\hat{k}_t,v_t)\left(\frac{d_1(\hat{k}_t,v_t)d_2(\hat{k}_t,v_t)}{2v_t^2(T-t)} - \frac{1}{2v_t^2(T-t)}\right)\right]\frac{1}{2v_t(T-t)}<C_t(T-t)^{-\frac52},
\end{eqnarray}
from where we deduce that
$$D_tD_sU_r=O(\nu^3(T-t)^{3H-1}).$$ 
Then we can easily see that, under (H2'),
\begin{eqnarray}
\label{DA}
D_t^W (D^-A)_r&=&\int_r^T D_t^W (U_s D_r^WU_s)ds\nonumber\\
&=&\int_r^T D_t^WU_s D_r^W U_s ds + \int_r^TU_s(D_t^W D_r^W U_s )ds\nonumber\\
&=&O(\nu^4 (T-r)^{4H}).
\end{eqnarray}
Then we deduce that $D^-\Psi_t=O(\nu^5(T-t)^{5H+1})$ and  $D^-\Phi_t=O(\nu^5(T-t)^{5H+\frac32})$. Again, direct computations allow us to see that for some positive constant $C$,
$$
BS^{-1}\left( \hat{k}_t, \Lambda_r\right)^{(iv)}\leq C(T-r)^{-1},
$$
which allows us to see that $T_1+T_3=o(\nu^4 (T-t)^{4H+1})$.  Now the proof is complete.
\end{proof}

\begin{proof}[Proof of Theorem \ref{themaintheorem}] 
The proof of this result follows similar ideas as the proof Theorem 2 in Al\`os and Shiraya (2019). 
Notice that Proposition \ref{Theoremcorrelatedcase} gives us that
$$
I( t,T,X_t,\hat{k}_t) -E[v_t]=T_1+T_2,
$$
where
\begin{eqnarray*}
T_1&=&I^{0}( t,T,X_t, \hat{k}_t^0)-E[v_t],\\
T_2&=&\frac{\rho }{2}E_t\left[\int_{t}^{T}( BS^{-1}) ^{\prime }( \hat{k}_t,\Gamma _{s}) H(s,T,X_{s},\hat{k}_t,v_{s})\zeta_{s}ds\right].
\end{eqnarray*}
Let us first see that $T_1=O( (T-t)^{2H+1})$. Notice that
$$
T_1=\frac12 E_t \left[ \int_{t}^{T} \left( BS^{-1}\right) ^{\prime \prime}\left( \hat{k}_t, \Lambda_r\right)U_r^2dr\right].
$$
Now, as
\begin{eqnarray}
\left( BS^{-1}\right) ^{\prime \prime }\left(\hat{k}_t,{\Lambda _{r}}\right) 
&=&\frac{(\Theta_r(\hat{k}_t))^4(T-t)^2-4(X_t-k_t)^2}{4\left( \exp(X_{t})N^{\prime }(d_1\left( k_t,\Theta_r(\hat{k}_t)\right))(T-t) \right)^{2} (\Theta_r(\hat{k}_t))^3}\nonumber\\
&=&
\frac{(\Theta_r(\hat{k}_t))^4-(I(t,T,X_t,\hat{k}_t))^4}{4\left( \exp(X_{t})N^{\prime }(d_1\left( k_t,\Theta_r(\hat{k}_t)\right)) \right)^{2} (\Theta_r(\hat{k}_t))^3},\nonumber
\end{eqnarray}
and $U_r=O((T-r)^H)$ it follows directly that $T_1=O((T-t)^{2H+1})$.

Now, let us study $T_2$. 
Towards this end, we apply the anticipating It\^{o}'s formula (\ref{Ito2}) to the process
\[
H(s,T,X_{s},\hat{k}_t,v_{s})J_{s},
\]%
where $J_{s}=\int_{s}^{T}( BS^{-1}) ^{\prime }( \hat{k}_t,\Gamma _{u})\zeta_{u}du$. Then,
taking conditional expectations we get
\begin{eqnarray*}
0 &=&E_{t}\Bigg[ H(t,T,X_t,\hat{k}_t,v_{t})J_{t}  \\
&&+\int^T_t H(s,T,X_{s},\hat{k}_t,v_{s}) dJ_{s}\\
&&+\int^T_t\frac{\partial^2}{\partial x \partial \sigma} H(s,T,X_{s},\hat{k}_t,v_{s}) J_{s} \frac{\partial v}{\partial y} (D^W_s Y_s) \sigma_s ds
\\
&&+\int^T_t \frac{\partial}{\partial x} H(s,T,X_{s},\hat{k}_t,v_{s}) (D^W_s J_s) \sigma_s ds \\
&&+\int^T_t \frac{\partial}{\partial t} H(s,T,X_{s},\hat{k}_t,v_{s})J_{s} ds \\
&&+\int^T_t \frac{\partial}{\partial \sigma} H(s,T,X_{s},\hat{k}_t,v_{s})\frac{\partial v}{\partial t} J_{s}ds  \\
&&+\int^T_t \frac{\partial}{\partial \sigma} H(s,T,X_{s},\hat{k}_t,v_{s})\frac{\partial v}{\partial y} J_{s}dY_s \\
&&+\int^T_t \frac{\partial}{\partial x} H(s,T,X_{s},\hat{k}_t,v_{s}) J_{s} dX_s \\
&&+\frac{1}{2}\int^T_t \frac{\partial^2}{\partial x^2} H(s,T,X_{s},\hat{k}_t,v_{s}) J_{s} d\langle X\rangle_s \Bigg]. 
\end{eqnarray*}
Now, using  the relationships
\begin{eqnarray*}
&&\frac{1}{\sigma(T-t)}\frac{\partial}{\partial \sigma}BS(t,T,x,k,\sigma)=\left(\frac{\partial^2}{\partial x^2} - \frac{\partial}{\partial x}\right)BS(t,T,x,k,\sigma),\\
&&\left(\frac{\partial}{\partial t} + \frac{1}{2}\sigma^2\frac{\partial^2}{\partial x^2}  - \frac{1}{2}\sigma^2 \frac{\partial}{\partial x} \right)BS(t,T,x,k,\sigma)=0,\\
&&D^W_s J_s = \rho \int_{s}^{T}( BS^{-1}) ^{\prime }( \hat{k}_t,\Gamma _{r})D^W_s \zeta_{r}dr,\\
&&D^W_s Y_s = \rho \int^T_s D^W_s \sigma^2_r dr,
\end{eqnarray*}
we obtain
\begin{eqnarray*}
0&=&E_{t}\Bigg[ H(t,T,X_t,\hat{k}_t,v_{t})J_{t}  \\
&&-\int_{t}^{T}H(s,T,X_{s},\hat{k}_t,v_{s})( BS^{-1}) ^{\prime
}( X_t,\Gamma _{s}) \zeta _{s}ds \\
&& 
+ \frac{\rho}{2} \int_{t}^{T} \left(\frac{\partial^3}{\partial x^3} - \frac{\partial^2}{\partial x^2} \right) H(s,T,X_{s},\hat{k}_t,v_{s})J_{s}\zeta_s ds
\\
&&+ \rho \int_{t}^{T}\frac{\partial }{\partial x}H(s,T,X_{s},\hat{k}_t,v_{s})\left( \int_{s}^{T}( BS^{-1}) ^{\prime }( k^{\ast}_t,\Gamma _{r}) ( D_{s}^{W}\zeta _{r}) dr\right) \sigma _{s}ds
\Bigg],
\end{eqnarray*}
which implies that
\begin{eqnarray*}
T_{2} &=&E_{t}\Bigg[ \frac{\rho }{2} H(t,T,X_t,\hat{k}_t,v_{t})J_{t}\\
&&
+ \frac{\rho^2}{4} \int_{t}^{T} \left(\frac{\partial^3}{\partial x^3} - \frac{\partial^2}{\partial x^2} \right) H(s,T,X_{s},\hat{k}_t,v_{s})J_{s}\zeta_s ds
\\
&&
+ \frac{\rho^2}{2} \int_{t}^{T}\frac{\partial }{\partial x}H(s,T,X_{s},\hat{k}_t,v_{s})\left( \int_{s}^{T}( BS^{-1}) ^{\prime }(\hat{k}_t,\Gamma _{r}) ( D_{s}^{W}\zeta _{r}) dr\right) \sigma _{s}ds\Bigg]\\
&=&T_{2}^{1}+T_{2}^{2}+T_{2}^{3}.
\end{eqnarray*}
Now, the study of $T_2$ is decomposed into two steps.

\textit{Step 1 } Notice that
\begin{eqnarray*}
H(t,T,X_{t},\hat{k}_t,v_t)
&=&\frac{e^{X_{t}}N^{\prime }(d_1(
\hat{k}_t,v_t) )}{v_t\sqrt{T-t}}\left( 1-\frac{d_1( \hat{k}_t,v_t) 
}{v_t\sqrt{T-t}}\right)\\
&=&\frac{e^{X_{t}}N^{\prime }(d_1(\hat{k}_t,v_t) )}{2v_t^3\sqrt{T-t}}\left( I_t(t,T,X_t,\hat{k}_t) - v_t^2 \right).
\end{eqnarray*}
Then

\begin{eqnarray}
\label{T21}
\lefteqn{\lim_{T\rightarrow t}T_{2}^{1}}\nonumber\\
&=&
\lim_{T\to t}\frac{\rho}{2} E_{t}\Bigg[ \frac{e^{X_{t}}N^{\prime }(d_1(
\hat{k}_t,v_t) )}{2v_t^3\sqrt{T-t}}\left( (I_t(t,T,X_t,\hat{k}_t))^2-v_t^2\right)\nonumber\\
&&\times\int_{t}^{T}\frac{1}{e^{X_{t}}N^{\prime }(d_{+}\left( \hat{k}_t,BS^{-1}(\hat{k}_t,\Gamma _{s})\right)) \sqrt{T-t}}\zeta _{s}ds \Bigg].
\end{eqnarray}%
and the norm of this is of the order $O(\nu (T-t)^{H+\frac12})$. Then, as
\begin{eqnarray*}
(I_t(t,T,X_t,\hat{k}_t))^2-v_t^2&=&(I(t,T,X_t,\hat{k}_t) +v_t)(I(t,T,X_t,\hat{k}_t) -v_t)\\
&=&(I(t,T,X_t,\hat{k}_t) +v_t)\left((I(t,T,X_t,\hat{k}_t) - E_t[v_t])+(E_t[v_t]-v_t)\right),
\end{eqnarray*}
we get
\begin{eqnarray}
\label{T21-2}
\lefteqn{\lim_{T\rightarrow t}T_{2}^{1}}\nonumber\\
&=&\lim_{T\rightarrow t}\frac{\rho}{4\sigma_t^2(T-t)}\nonumber\\
&&\times
E_t \left[ (I_t(t,T,X_t,\hat{k}_t)+v_t)\left((I(t,T,X_t,\hat{k}_t) - E_t[v_t])+(E_t[v_t]-v_t)\right)\int_{t}^{T} \int_{s}^{T}D_{s}^{W}\sigma _{r}^{2}dr ds\right]\nonumber\\
&=&\lim_{T\rightarrow t}\frac{\rho}{4\sigma_t^2(T-t)}
E_t \left[ (I_t(t,T,X_t,\hat{k}_t)+v_t)(I(\hat{k}_t) - E_t[v_t])\int_{t}^{T} \int_{s}^{T}D_{s}^{W}\sigma _{r}^{2}dr ds\right] \nonumber\\
&&+\lim_{T\rightarrow t}\frac{\rho}{4\sigma_t^2 (T-t)}
E_t \left[(I_t(t,T,X_t,\hat{k}_t)+v_t)(E_t[v_t]-v_t)\int_{t}^{T} \int_{s}^{T}D_{s}^{W}\sigma _{r}^{2}dr ds\right]\nonumber\\
&=:& \lim_{T\to t}T_2^{1,1}+\lim_{T\to t}T_2^{1,2}.
\end{eqnarray}
Notice that
\begin{eqnarray}
T_2^{1,1}&=&\lim_{T\rightarrow t}(I(t,T,X_t,\hat{k}_t) - E_t[v_t])\frac{\rho}{4\sigma_t^2(T-t)}
E_t \left[ (I_t(\hat{k}_t)+v_t)\int_{t}^{T} \int_{s}^{T}D_{s}^{W}\sigma _{r}^{2}dr ds\right]\nonumber\\
&=&(I(t,T,X_t,\hat{k}_t) - E_t[v_t])\times O((T-t)^{\frac{1}{2}+H}).
\end{eqnarray}
On the other hand,
\begin{eqnarray}
T_2^{1,2}&\leq&\lim_{T\rightarrow t}\frac{\rho}{4\sigma_t^2 (T-t)}
\left(E_t \left[\left((I_t(t,T,X_t,\hat{k}_t)+v_t)\int_{t}^{T} \int_{s}^{T}D_{s}^{W}\sigma _{r}^{2}dr ds\right)^2\right]\right)^{1/2}\nonumber\\
&&\times
\left(E_t \left[ \left(E_t[v_t]-v_t\right)^2 \right]\right)^{1/2}\nonumber\\
\end{eqnarray} Then, as 
\begin{eqnarray*}
E_t[v_t]-v_t&=&-\int_t^T E_r[D_r^W v_t]dW_r\nonumber\\
&=&-\frac{1}{2\sqrt{T-t}}\int_t^T E_r\left[\frac{\int_r^T D_r^W \sigma_s^2 ds}{v_t}\right]dW_r,
\end{eqnarray*}
and then, since $I(t,T,X_t,\hat{k}_t)+v_t<2b$ (see (H1)),
\begin{eqnarray}
\label{T212}
\lefteqn{\lim_{T\rightarrow t}T_{2}^{1,2}}\nonumber\\
&\le&
\lim_{T\rightarrow t}\frac{b\rho}{\sigma_t^2 (T-t)}
\left(E_t \left[\left(\int_{t}^{T} \int_{s}^{T}D_{s}^{W}\sigma _{r}^{2}dr ds\right)^2\right]\right)^{1/2}
\left(E_t \left[ \left(\frac{1}{2\sqrt{T-t}}\int_t^T E_r\left[\frac{\int_r^T D_r^W \sigma_s^2 ds}{v_t}\right]dW_r\right)^2 \right]\right)^{1/2}\nonumber\\
&=&
\lim_{T\rightarrow t}\frac{b\rho}{4 \sigma_t^3 (T-t)^2}
\left(E_t \left[\left(\int_{t}^{T} \int_{s}^{T}D_{s}^{W}\sigma _{r}^{2}dr ds\right)^2\right]\right)^{1/2}
\left(E_t \left[ \int_t^T \left( E_r\left[\int_r^T D_r^W \sigma_s^2 ds\right]\right)^2 dr \right]\right)^{1/2}
\nonumber\\
&=&O(T-t)^{\frac12 +2H}.
\end{eqnarray}

\textit{Step 2}. In order to see that $T_2^2$ and $T_2^3$ are $O(T-t)^{2H}$ we apply again the anticipating It\^{o}'s formula to the processes%
\begin{eqnarray*}
\left( \frac{\partial ^{3}}{\partial x^{3}}-%
\frac{\partial^2 }{\partial x^2}\right) H(s,T,X_{s},\hat{k}_t,v_{s})Z_{s},
\end{eqnarray*}
and%
\[
\frac{\partial H}{\partial x}(s,T,X_{s},\hat{k}_t,v_{s})R_{s},
\]%
where 
\begin{eqnarray*}
Z_{s}&:=&\int_{s}^{T} \zeta_u J_{u} du,\\
R_{s}&:=&\int_{s}^{T} \left(\int_{u}^{T}(BS^{-1}) ^{\prime }(  \hat{k}_t,\Gamma _{r}) (D_{s}^{W}\zeta_{r}) dr\right) \sigma _{u}du.
\end{eqnarray*}
Then we get%
\begin{eqnarray}
T_{2}^{2}
&=&\frac{\rho^2}{4}E_{t}\Bigg[ \left( \frac{\partial ^{3}}{\partial x^{3}}-\frac{\partial^2 }{\partial x^2}\right) 
H(t,T,X_{t}, \hat{k}_t,v_{t})Z_t
\nonumber \\
&&+ \frac{\rho}{2}\int_{t}^{T}\left( \frac{\partial^{3}}{\partial x^{3}}-
\frac{\partial^2 }{\partial x^2}\right)^2 H(s,T,X_{s},\hat{k}_t,v_{s}) Z_{s}\zeta_s ds  \nonumber \\
&&+\rho \int_{t}^{T}\frac{\partial}{\partial x}\left( \frac{\partial ^{3}}{\partial x^{3}}-
\frac{\partial^2 }{\partial x^2}\right) 
H(s,T,X_{s},\hat{k}_t,v_{s})(D_{s}^{W} Z_s ) \sigma_s ds\Bigg], \label{T22}
\end{eqnarray}%
and 
\begin{eqnarray}
T_{2}^{3} &=&
\frac{\rho^2}{2}E_{t}\Bigg[\frac{\partial H}{\partial x}%
(t,T,X_t,\hat{k}_t,v_{t})R_t \nonumber \\
&&+\frac{\rho }{2}\int_{t}^{T}\left( \frac{\partial ^{3}}{\partial x^{3}}-%
\frac{\partial^2 }{\partial x^2}\right) \frac{\partial H}{\partial x}%
(s,T,X_{s},\hat{k}_t,v_{s}) R_{s} \zeta_{s}ds  \nonumber \\
&&+\rho \int_{t}^{T}\frac{\partial ^{2}H}{\partial x^{2}}(s,T,X_{s},\hat{k}_t,v_{s})\nonumber\\
&&\hspace{0.5cm}\times\left(\int_{s}^{T}\int_{r}^{T}\left( 
BS^{-1}\right) ^{\prime }(  \hat{k}_t,\Gamma_{u}) (
D_{s}^{W}D_{r}^{W}\zeta_{u}) dudr\right) \sigma_{s} ds\Bigg].  \label{T23}
\end{eqnarray}%
Lemma 4.1 in Al\`{o}s, Le\'{o}n and Vives (2007) gives us that the last two terms
in (\ref{T22}) and (\ref{T23}) are $O(\nu^3(T-t)^{3H})$. 
Now, as

\begin{eqnarray*}
\lefteqn{\left|\left( \frac{\partial ^{3}}{\partial x^{3}}-\frac{\partial^2 }{\partial x^2}\right) H(t,T,X_t,\hat{k}_t,v_{t})\right|}\\
&=& 
\Bigg|- \frac{d_1\left( \hat{k}_t,v_t\right)}{v_t \sqrt{T-t}} 
\frac{e^{X_{t}}N^{\prime }(d_1\left(\hat{k}_t,v_t\right) )}{v_t\sqrt{T-t}}
\left( 1-\frac{d_1\left( \hat{k}_t,v_t\right) }{v_t\sqrt{T-t}}\right)^3
\nonumber\\
&&- \left(3 - \frac{d_1\left( \hat{k}_t,v_t\right)}{v_t\sqrt{T-t}} \right)
\frac{e^{X_{t}}N^{\prime }(d_1\left(\hat{k}_t,v_t\right) )}{(v_t\sqrt{T-t})^3}
\left( 1-\frac{d_1\left( \hat{k}_t,v_t\right) }{v_t\sqrt{T-t}}\right)
+ 3\frac{e^{X_{t}}N^{\prime }(d_1\left(\hat{k}_t,v_t\right) )}{(v_t\sqrt{T-t})^5}\Bigg|\\
&=& 
\frac{3e^{X_{t}}N^{\prime }(d_1\left(\hat{k}_t,v_t\right) )}{v_t^5}(T-t)^{-\frac{5}{2}}
+O((T-t)^{-\frac{3}{2}},
\end{eqnarray*}
and
\begin{eqnarray*}
\lefteqn{\left|\frac{\partial H}{\partial x}(t,T,X_t,\hat{k}_t,v_{t})\right|}\\
&=&\left|\frac{e^{X_{t}}N^{\prime }(d_{1}\left(
\hat{k}_t,v_t\right) )}{v_t\sqrt{T-t}}\left( 1-\frac{d_{1}\left( \hat{k}_t,v_t\right) 
}{v_t\sqrt{T-t}}\right)^2 
- \frac{e^{X_{t}}N^{\prime }(d_{1}\left(\hat{k}_t,v_t\right) )}{(v_t\sqrt{T-t})^3}\right|\\
&=&
\frac{e^{X_{t}}N^{\prime }(d_1\left(\hat{k}_t,v_t\right) )}{v_t^3}(T-t)^{-\frac{3}{2}}
+O((T-t)^{-\frac{1}{2}}.
\end{eqnarray*}
\begin{eqnarray}
\label{LT22}
\lefteqn{\lim_{T\rightarrow t}{T_{2}^{2}} }\nonumber\\
&=& \frac{\rho^2}{4}E_{t}\left[ \left( \frac{\partial ^{3}}{\partial x^{3}}-\frac{\partial^2 }{\partial x^2}\right)H(t,T,X_t,\hat{k}_t,v_{t}) Z_t\right]\nonumber\\
&=&\frac{\rho^2}{4(T-t)^3}E_{t}\Bigg[ 3\frac{e^{X_{t}}N'(d_1(\hat{k}_t,v_t)) }{v_{t}^{5}}\nonumber\\
&& \times \int_{t}^{T} \sigma_s \left(\int_{t}^{T} D_s^W \sigma _{r}^{2}dr \right) \left(\int_{s}^{T}\frac{\zeta_{r}}{e^{X_{t}}N^{\prime }(d_1( \hat{k}_t,BS^{-1}(\hat{k}_t,\Gamma _{r}))) }dr\right) ds \Bigg]\nonumber\\
&=&\lim_{T\rightarrow t}\frac{3\rho ^{2}}{4\sigma_{t}^{5}(T-t)^{3}}E_{t}\left[ \int_{t}^{T}\left( \int_{s}^{T}D_{s}^{W}\sigma _{r}^{2}dr\right) \left(\int_{s}^{T}\zeta _{r}dr\right) \sigma _{s}ds\right]\nonumber\\
&=&\lim_{T\rightarrow t}\frac{3\rho ^{2}}{4\sigma _{t}^{5}(T-t)^{3}}
E_{t}\left[\int_{t}^{T}\left( \int_{s}^{T}D_{s}^{W}\sigma _{r}^{2}dr\right) \left(\int_{s}^{T}\sigma _{r} \int_{r}^{T}D_{r}^{W}\sigma _{\theta}^{2}d\theta  dr\right) \sigma _{s}ds \right] \nonumber\\
&=&\lim_{T\rightarrow t}\frac{3\rho ^{2}}{4\sigma _{t}^{3}(T-t)^{3}}E_{t}\left[\int_{t}^{T}\left(\int_{s}^{T}D_{s}^{W}\sigma _{r}^{2}dr\right) \left( \int_{s}^{T}\int_{r}^{T}D_{r}^{W}\sigma _{\theta }^{2}d\theta dr\right) ds\right]\nonumber\\
&=&\lim_{T\rightarrow t}\frac{3\rho ^{2}}{8\sigma _{t}^{3}(T-t)^{3}}E_{t}\left[\left( \int_{t}^{T}\int_{s}^{T}D_{s}^{W}\sigma _{r}^{2} drds\right) ^{2}\right], 
\end{eqnarray}%
and 

\begin{eqnarray}
\label{TL23}
\lefteqn{\lim_{T\rightarrow t}{T_{2}^{3}}}\nonumber\\
&=&\lim_{T\rightarrow t}\frac{\rho^2}{{2}}E_{t}\left[\frac{\partial H}{\partial x}(t,T,X_t,\hat{k}_t,v_{t})R_t\right]\nonumber\\
&=&
\lim_{T\rightarrow t}\frac{\rho^2}{{2}}E_{t}\Bigg[
\frac{1}{4}\frac{e^{X_{t}}N^{\prime }(d_1( \hat{k}_t,v_t) )}{( v_t \sqrt{T-t})^{3}}\left( v_t^{2}(T-t)-4\right)\nonumber\\
&&
\times\int_{t}^{T} \int_{s}^{T}\frac{1}{e^{X_{t}}N^{\prime }(d_1(  \hat{k}_t,BS^{-1}(\hat{k}_t,\Gamma _{r}))) \sqrt{T-t}} \left(D_{s}^{W}\left(\sigma_r \int_{r}^{T} D_r^{W}\sigma _{u}^{2}du\right)\right) dr \sigma _{s}ds \Bigg]
\nonumber\\
&=&-\lim_{T\rightarrow t}\frac{\rho^{2}}{2\sigma _{t}^{2}(T-t)^{2}}
E_{t}\Bigg[ \int_{t}^{T} \int_{s}^{T}D_{s}^{W}\sigma_{r} \int_{r}^{T}D_{r}^{W}\sigma _{u}^{2}du dr ds \nonumber\\
&& +\int_{t}^{T} \int_{s}^{T} \sigma_{r}\int_{r}^{T}D_{s}^{W}D_{r}^{W}\sigma _{u}^{2}du dr ds \Bigg]\nonumber \\
&=&
-\lim_{T\rightarrow t}\frac{\rho ^{2}}{{2\sigma _{t}^2(T-t)^2}}E_{t}\left[ 
\int_{t}^{T} \int_{s}^{T}D_{s}^{W}\sigma_{r} \int_{r}^{T}D_{r}^{W}\sigma _{u}^{2}du dr ds
\right] 
\nonumber\\
&&-\lim_{T\rightarrow t}\frac{\rho^{2}}{2\sigma_{t}(T-t)^{2}}E_{t}\left[ \int_{t}^{T} \int_{s}^{T} \int_{r}^{T}D_{s}^{W}D_{r}^{W}\sigma _{u}^{2}du dr ds\right].
\end{eqnarray}
Let us now summarize the previous computations. We have seen that
\begin{eqnarray}
I(t,T,X_t,\hat{k}_t) -E_t[v_t]&=&T_1+T_2\nonumber\\
&=&T_1
+T_2^{1,1}+T_2^{1,2}
+T_2^2+T_2^3
\end{eqnarray}
where $$\lim_{T\to t}(T_1+T_2^{1,2})
=o(T-t)^{2H},$$
\begin{eqnarray*}
\lim_{T\to t}
T_2^{1,1}
&=&
\lim_{T\to t}
(I(t,T,X_t,\hat{k}_t) - E_t[v_t])\frac{\rho}{4\sigma_t^2(T-t)}
E_t \left[ (I(t,T,X_t,\hat{k}_t)+v_t)\int_{t}^{T} \int_{s}^{T}D_{s}^{W}\sigma _{r}^{2}dr ds\right],\end{eqnarray*}
\begin{eqnarray*}
\lim_{T\to t}
T_2^2
&=&
\lim_{T\to t}
\frac{3\rho ^{2}}{8\sigma _{t}^{3}(T-t)^{3}}E_{t}\left[\left( \int_{t}^{T}\int_{s}^{T}D_{s}^{W}\sigma _{r}^{2} drds\right) ^{2}\right]
+o(T-t)^{2H},
\end{eqnarray*}
and
\begin{eqnarray}
\lim_{T\to t}
T_2^3&=&
-\lim_{T\rightarrow t}\frac{\rho ^{2}}{{2\sigma _{t}^2(T-t)^2}}E_{t}\left[ 
\int_{t}^{T} \int_{s}^{T}D_{s}^{W}\sigma_{r} \int_{r}^{T}D_{r}^{W}\sigma _{u}^{2}du dr ds
\right] \nonumber\\
&&-\lim_{T\rightarrow t}\frac{\rho^{2}}{2\sigma_{t}(T-t)^{2 }}E_{t}\left[ \int_{t}^{T} \int_{s}^{T} \int_{r}^{T}D_{s}^{W}D_{r}^{W}\sigma _{u}^{2}du dr ds\right]\nonumber\\
&&+o(T-t)^{2H}.
\end{eqnarray}
Then, as there is some $\epsilon$ such that, if $T-t<\epsilon$ 
$$
\left|\frac{\rho}{4\sigma_t^2(T-t)}
E_t \left( (I_t(t,T,X_t,\hat{k}_t)+v_t)\int_{t}^{T} \int_{s}^{T}D_{s}^{W}\sigma _{r}^{2}dr ds\right)\right|<1
$$
we can write
\begin{eqnarray}
\lim_{T\rightarrow t} \frac{I(t,T,X_t,\hat{k}_t) -E_t[v_t]}{(T-t)^{2H}}&=&\lim_{T\rightarrow t}\frac{1}{(T-t)^{2H}}\frac{T_1+T_2^{1,2}+T_2^2+T_3^3}{1-\frac{\rho}{4\sigma_t^2(T-t)}
E_t \left( (I_t(t,T,X_t,\hat{k}_t)+v_t)\int_{t}^{T} \int_{s}^{T}D_{s}^{W}\sigma _{r}^{2}dr ds\right)}\nonumber\\
&=&\lim_{T\rightarrow t}\frac{3\rho ^{2}}{8\sigma _{t}^{3}(T-t)^{3+2H}}E_{t}\left[\left( \int_{t}^{T}\int_{s}^{T}D_{s}^{W}\sigma _{r}^{2} drds\right) ^{2}\right]\nonumber\\
&&
-\lim_{T\rightarrow t}\frac{\rho ^{2}}{{2\sigma _{t}^2(T-t)^{2+2H}}}E_{t}\left[ 
\int_{t}^{T} \int_{s}^{T}D_{s}^{W}\sigma_{r} \int_{r}^{T}D_{r}^{W}\sigma _{u}^{2}du dr ds
\right] \nonumber\\
&&-\lim_{T\rightarrow t}\frac{\rho^{2}}{{2\sigma_{t}(T-t)^{2+2H }}}E_{t}\left[ \int_{t}^{T} \int_{s}^{T} \int_{r}^{T}D_{s}^{W}D_{r}^{W}\sigma _{u}^{2}du dr ds\right],
\end{eqnarray}
as we wanted to prove.
\end{proof}

\section{Standard deviations of Monte Carlo simulations}\label{appendix3}

We list the standard deviations of the Monte Carlo method used in Section \ref{section4}.
The standard deviation of the implied volatility cannot be obtained directly.
Therefore, we approximate the standard deviation of the implied volatility by that of the option premium divided by the Vega value.
Since the standard deviation of ``AS(4.8)'' is regarded equal to that of ``ATMI'' in this computation, we summarize them in the correlated cases.

\begin{table}[htpb]
\caption{Standard deviations of the Monte Carlo simulations ($\rho=0$, $\sigma_0=20\%$, $\alpha=0.8$)}
\label{table:t7}
\newcolumntype{Y}{>{\centering\arraybackslash}X}
\newcolumntype{Z}{>{\raggedleft\arraybackslash}X}
\begin{tabularx}{\linewidth}{XXYYY} \hline
$H$  &  $T$    & VS   & IV($\hat{k}$)      & ATMI    \\ \hline\hline
0.1 & 0.25 & 0.001\% & 0.002\% & 0.002\%  \\
    & 0.5  & 0.001\% & 0.002\% & 0.002\%  \\
    & 1.0  & 0.001\% & 0.002\% & 0.002\%  \\
    & 2.0  & 0.001\% & 0.002\% & 0.002\%  \\
~   & 3.0  & 0.001\% & 0.002\% & 0.003\%  \\ \hdashline
0.3 & 0.25 & 0.001\% & 0.002\% & 0.002\%  \\
    & 0.5  & 0.001\% & 0.002\% & 0.002\%  \\
    & 1.0  & 0.001\% & 0.002\% & 0.002\%  \\
    & 2.0  & 0.001\% & 0.002\% & 0.002\%  \\
~   & 3.0  & 0.001\% & 0.003\% & 0.003\%  \\ \hdashline
0.5 & 0.25 & 0.001\% & 0.001\% & 0.001\%  \\
    & 0.5  & 0.001\% & 0.001\% & 0.001\%  \\
    & 1.0  & 0.001\% & 0.002\% & 0.002\%  \\
    & 2.0  & 0.001\% & 0.002\% & 0.003\%  \\
~   & 3.0  & 0.002\% & 0.003\% & 0.003\%  \\ \hdashline
0.7 & 0.25 & 0.000\% & 0.001\% & 0.001\%  \\
    & 0.5  & 0.001\% & 0.001\% & 0.001\%  \\
    & 1.0  & 0.001\% & 0.002\% & 0.002\%  \\
    & 2.0  & 0.001\% & 0.003\% & 0.003\%  \\
~   & 3.0  & 0.002\% & 0.003\% & 0.003\%  \\ \hdashline
0.9 & 0.25 & 0.000\% & 0.001\% & 0.001\%  \\
    & 0.5  & 0.000\% & 0.001\% & 0.001\%  \\
    & 1.0  & 0.001\% & 0.002\% & 0.002\%  \\
    & 2.0  & 0.002\% & 0.003\% & 0.003\%  \\
~   & 3.0  & 0.002\% & 0.003\% & 0.003\% \\ \hline
\end{tabularx}
\end{table}

\begin{table}[htpb]
\caption{Standard deviations of the Monte Carlo simulations ($\rho=-0.8$, $\sigma_0=20\%$, $\alpha=0.8$)}
\label{table:t8}
\newcolumntype{Y}{>{\centering\arraybackslash}X}
\newcolumntype{Z}{>{\raggedleft\arraybackslash}X}
\begin{tabularx}{\linewidth}{XXYYY} \hline
$H$   & $T$    & VS   &  IV($\hat{k}$)     & ATMI \& AS(4.8) \\ \hline\hline
0.1 & 0.25 & 0.001\% & 0.002\% & 0.002\% \\
    & 0.5  & 0.001\% & 0.002\% & 0.002\% \\
    & 1.0  & 0.001\% & 0.002\% & 0.002\% \\
    & 2.0  & 0.001\% & 0.002\% & 0.002\% \\
~   & 3.0  & 0.001\% & 0.002\% & 0.002\% \\\hdashline
0.3 & 0.25 & 0.001\% & 0.001\% & 0.001\% \\
    & 0.5  & 0.001\% & 0.002\% & 0.002\% \\
    & 1.0  & 0.001\% & 0.002\% & 0.002\% \\
    & 2.0  & 0.001\% & 0.002\% & 0.002\% \\
~   & 3.0  & 0.001\% & 0.002\% & 0.002\% \\\hdashline
0.5 & 0.25 & 0.001\% & 0.001\% & 0.001\% \\
    & 0.5  & 0.001\% & 0.001\% & 0.001\% \\
    & 1.0  & 0.001\% & 0.002\% & 0.002\% \\
    & 2.0  & 0.001\% & 0.002\% & 0.002\% \\
~   & 3.0  & 0.002\% & 0.002\% & 0.002\% \\\hdashline
0.7 & 0.25 & 0.000\% & 0.001\% & 0.001\% \\
    & 0.5  & 0.001\% & 0.001\% & 0.001\% \\
    & 1.0  & 0.001\% & 0.002\% & 0.002\% \\
    & 2.0  & 0.001\% & 0.002\% & 0.002\% \\
~   & 3.0  & 0.002\% & 0.003\% & 0.003\% \\\hdashline
0.9 & 0.25 & 0.000\% & 0.000\% & 0.000\% \\
    & 0.5  & 0.000\% & 0.001\% & 0.001\% \\
    & 1.0  & 0.001\% & 0.001\% & 0.001\% \\
    & 2.0  & 0.002\% & 0.002\% & 0.002\% \\
~   & 3.0  & 0.002\% & 0.003\% & 0.003\% \\ \hline
\end{tabularx}
\end{table}

\begin{table}[htpb]
\caption{Standard deviations of the Monte Carlo simulations ($\rho=0$, $\sigma_0=40\%$, $\alpha=0.8$)}
\label{table:t9}
\newcolumntype{Y}{>{\centering\arraybackslash}X}
\newcolumntype{Z}{>{\raggedleft\arraybackslash}X}
\begin{tabularx}{\linewidth}{XXYYY} \hline
$H$  &  $T$    & VS   & IV($\hat{k}$)      & ATMI    \\ \hline\hline
0.1 & 0.25 & 0.002\% & 0.004\% & 0.004\%  \\
    & 0.5  & 0.002\% & 0.004\% & 0.004\%  \\
    & 1.0  & 0.002\% & 0.004\% & 0.004\%  \\
    & 2.0  & 0.002\% & 0.004\% & 0.004\%  \\
~   & 3.0  & 0.002\% & 0.004\% & 0.004\%  \\\hdashline
0.3 & 0.25 & 0.001\% & 0.003\% & 0.003\%  \\
    & 0.5  & 0.002\% & 0.003\% & 0.003\%  \\
    & 1.0  & 0.002\% & 0.004\% & 0.004\%  \\
    & 2.0  & 0.003\% & 0.004\% & 0.004\%  \\
~   & 3.0  & 0.003\% & 0.005\% & 0.005\%  \\\hdashline
0.5 & 0.25 & 0.001\% & 0.002\% & 0.002\%  \\
    & 0.5  & 0.001\% & 0.003\% & 0.003\%  \\
    & 1.0  & 0.002\% & 0.003\% & 0.003\%  \\
    & 2.0  & 0.003\% & 0.004\% & 0.004\%  \\
~   & 3.0  & 0.003\% & 0.005\% & 0.005\%  \\\hdashline
0.7 & 0.25 & 0.001\% & 0.001\% & 0.001\%  \\
    & 0.5  & 0.001\% & 0.002\% & 0.002\%  \\
    & 1.0  & 0.002\% & 0.003\% & 0.003\%  \\
    & 2.0  & 0.003\% & 0.005\% & 0.005\%  \\
~   & 3.0  & 0.004\% & 0.005\% & 0.005\%  \\\hdashline
0.9 & 0.25 & 0.001\% & 0.001\% & 0.001\%  \\
    & 0.5  & 0.001\% & 0.002\% & 0.002\%  \\
    & 1.0  & 0.002\% & 0.003\% & 0.003\%  \\
    & 2.0  & 0.003\% & 0.005\% & 0.005\%  \\
~   & 3.0  & 0.004\% & 0.006\% & 0.006\% \\\hline
\end{tabularx}
\end{table}

\begin{table}[htpb]
\caption{Standard deviations of the Monte Carlo simulations ($\rho=-0.8$, $\sigma_0=40\%$, $\alpha=0.8$)}
\label{table:t10}
\newcolumntype{Y}{>{\centering\arraybackslash}X}
\newcolumntype{Z}{>{\raggedleft\arraybackslash}X}
\begin{tabularx}{\linewidth}{XXYYY} \hline
$H$   & $T$    & VS   &  IV($\hat{k}$)     & ATMI \& AS(4.8) \\ \hline\hline
0.1 & 0.25 & 0.002\% & 0.004\% & 0.004\% \\
    & 0.5  & 0.002\% & 0.004\% & 0.004\% \\
    & 1.0  & 0.002\% & 0.004\% & 0.004\% \\
    & 2.0  & 0.002\% & 0.004\% & 0.004\% \\
~   & 3.0  & 0.002\% & 0.003\% & 0.004\% \\\hdashline
0.3 & 0.25 & 0.001\% & 0.003\% & 0.003\% \\
    & 0.5  & 0.002\% & 0.003\% & 0.003\% \\
    & 1.0  & 0.002\% & 0.003\% & 0.003\% \\
    & 2.0  & 0.003\% & 0.003\% & 0.004\% \\
~   & 3.0  & 0.003\% & 0.003\% & 0.004\% \\\hdashline
0.5 & 0.25 & 0.001\% & 0.002\% & 0.002\% \\
    & 0.5  & 0.001\% & 0.002\% & 0.002\% \\
    & 1.0  & 0.002\% & 0.003\% & 0.003\% \\
    & 2.0  & 0.003\% & 0.003\% & 0.004\% \\
~   & 3.0  & 0.003\% & 0.004\% & 0.004\% \\\hdashline
0.7 & 0.25 & 0.001\% & 0.001\% & 0.001\% \\
    & 0.5  & 0.001\% & 0.002\% & 0.002\% \\
    & 1.0  & 0.002\% & 0.003\% & 0.003\% \\
    & 2.0  & 0.003\% & 0.004\% & 0.004\% \\
~   & 3.0  & 0.004\% & 0.004\% & 0.004\% \\\hdashline
0.9 & 0.25 & 0.001\% & 0.001\% & 0.001\% \\
    & 0.5  & 0.001\% & 0.001\% & 0.002\% \\
    & 1.0  & 0.002\% & 0.002\% & 0.002\% \\
    & 2.0  & 0.003\% & 0.004\% & 0.004\% \\
~   & 3.0  & 0.004\% & 0.005\% & 0.005\% \\\hline
\end{tabularx}
\end{table}

\begin{table}[htpb]
\caption{Standard deviations of the Monte Carlo simulations ($\rho=0$, $\sigma_0=20\%$, $\alpha=2$)}
\label{table:t11}
\newcolumntype{Y}{>{\centering\arraybackslash}X}
\newcolumntype{Z}{>{\raggedleft\arraybackslash}X}
\begin{tabularx}{\linewidth}{XXYYY} \hline
$H$  &  $T$    & VS   & IV($\hat{k}$)      & ATMI    \\ \hline\hline
0.1 & 0.25 & 0.002\% & 0.005\% & 0.005\%  \\
    & 0.5  & 0.002\% & 0.005\% & 0.005\%  \\
    & 1.0  & 0.002\% & 0.005\% & 0.005\%  \\
    & 2.0  & 0.002\% & 0.005\% & 0.005\%  \\
~   & 3.0  & 0.002\% & 0.005\% & 0.005\%  \\\hdashline
0.3 & 0.25 & 0.002\% & 0.004\% & 0.004\%  \\
    & 0.5  & 0.002\% & 0.004\% & 0.004\%  \\
    & 1.0  & 0.002\% & 0.005\% & 0.005\%  \\
    & 2.0  & 0.003\% & 0.005\% & 0.005\%  \\
~   & 3.0  & 0.003\% & 0.005\% & 0.005\%  \\\hdashline
0.5 & 0.25 & 0.001\% & 0.002\% & 0.003\%  \\
    & 0.5  & 0.002\% & 0.003\% & 0.003\%  \\
    & 1.0  & 0.002\% & 0.004\% & 0.004\%  \\
    & 2.0  & 0.003\% & 0.005\% & 0.005\%  \\
~   & 3.0  & 0.003\% & 0.005\% & 0.005\%  \\\hdashline
0.7 & 0.25 & 0.001\% & 0.002\% & 0.002\%  \\
    & 0.5  & 0.001\% & 0.003\% & 0.003\%  \\
    & 1.0  & 0.002\% & 0.004\% & 0.004\%  \\
    & 2.0  & 0.003\% & 0.005\% & 0.005\%  \\
~   & 3.0  & 0.003\% & 0.005\% & 0.005\%  \\\hdashline
0.9 & 0.25 & 0.001\% & 0.001\% & 0.001\%  \\
    & 0.5  & 0.001\% & 0.002\% & 0.002\%  \\
    & 1.0  & 0.002\% & 0.004\% & 0.004\%  \\
    & 2.0  & 0.003\% & 0.005\% & 0.005\%  \\
~   & 3.0  & 0.003\% & 0.005\% & 0.005\% \\\hline
\end{tabularx}
\end{table}

\begin{table}[htpb]
\caption{Standard deviations of the Monte Carlo simulations ($\rho=-0.8$, $\sigma_0=20\%$, $\alpha=2$)}
\label{table:t12}
\newcolumntype{Y}{>{\centering\arraybackslash}X}
\newcolumntype{Z}{>{\raggedleft\arraybackslash}X}
\begin{tabularx}{\linewidth}{XXYYY} \hline
$H$   & $T$    & VS   &  IV($\hat{k}$)     & ATMI \& AS(4.8) \\ \hline\hline
0.1 & 0.25 & 0.002\% & 0.005\% & 0.005\% \\
    & 0.5  & 0.002\% & 0.005\% & 0.005\% \\
    & 1.0  & 0.002\% & 0.005\% & 0.005\% \\
    & 2.0  & 0.002\% & 0.005\% & 0.005\% \\
~   & 3.0  & 0.002\% & 0.004\% & 0.005\% \\ \hdashline
0.3 & 0.25 & 0.002\% & 0.003\% & 0.003\% \\
    & 0.5  & 0.002\% & 0.004\% & 0.004\% \\
    & 1.0  & 0.002\% & 0.004\% & 0.004\% \\
    & 2.0  & 0.003\% & 0.004\% & 0.004\% \\
~   & 3.0  & 0.003\% & 0.004\% & 0.004\% \\ \hdashline
0.5 & 0.25 & 0.001\% & 0.002\% & 0.002\% \\
    & 0.5  & 0.002\% & 0.003\% & 0.003\% \\
    & 1.0  & 0.002\% & 0.004\% & 0.004\% \\
    & 2.0  & 0.003\% & 0.004\% & 0.004\% \\
~   & 3.0  & 0.003\% & 0.004\% & 0.004\% \\ \hdashline
0.7 & 0.25 & 0.001\% & 0.002\% & 0.002\% \\
    & 0.5  & 0.001\% & 0.002\% & 0.003\% \\
    & 1.0  & 0.002\% & 0.003\% & 0.003\% \\
    & 2.0  & 0.003\% & 0.004\% & 0.004\% \\
~   & 3.0  & 0.003\% & 0.004\% & 0.004\% \\ \hdashline
0.9 & 0.25 & 0.001\% & 0.001\% & 0.001\% \\
    & 0.5  & 0.001\% & 0.002\% & 0.002\% \\
    & 1.0  & 0.002\% & 0.003\% & 0.003\% \\
    & 2.0  & 0.003\% & 0.004\% & 0.004\% \\
~   & 3.0  & 0.003\% & 0.004\% & 0.005\% \\ \hline
\end{tabularx}
\end{table}


\begin{thebibliography}{9}




\bibitem{ALV} Al\`{o}s, E., Le\'{o}n, J. A., and Vives, J. ``On the short-time behavior of the implied volatility for jump-diffusion models with stochastic volatility,'' Finance and Stochastics 11 (4), (2007): 571-589.

\bibitem{AS} Al\`{o}s, E., and Shiraya, K. ``Estimating the Hurst parameter from short term volatility swaps: a Malliavin calculus approach.'' Finance and Stochastics 23.2 (2019): 423-447.

\bibitem{BFG}Bayer, C., Friz, P. and  Gatheral, J.  ``'Pricing under rough volatility." Quantitative Finance, 16:6 (2016), pp. 887-904.

\bibitem{bg} Bergomi, L. and Guyon, J.. The Smile in Stochastic Volatility Models (December 2, 2011). Available at SSRN: https://ssrn.com/abstract=1967470

\bibitem{CL1} Carr, P. and Lee, R. ``Robust replication of volatility
derivatives,'' PRMIA award for Best Paper in Derivatives, MFA 2008 Annual
Meeting, (2008).


\bibitem{CR} Comte, F. and Renault, E.: ``Long memory in continuous-time stochastic volatility models,''  Math. Financ. 8, (1998): 291-323.


\bibitem{DU2} Decreusefond, L. and Ust\"unel, A. S. . ``Stochastic analysis of the fractional
Brownian motion". Potential Anal. 10, (1999), pp. 177–214.

\bibitem{efgr} El Euch, O. Fukasawa, M., Gatheral, J. and Rosenbaum, M. ``Short-Term At-the-Money Asymptotics Under Stochastic Volatility Models,''  SIAM Journal on Financial Mathematics 10.2 (2019): 491-511.




\bibitem{FG} Friz, P., and Gatheral, J. ``Valuation of volatility
derivatives as an inverse problem,'' Quantitative Finance 5 (6), (2005): 531-542.



\bibitem{N} Nualart, D. ``The Malliavin Calculus and Related Topics. Second Edition,''
Probability and its Applications. Springer-Verlag, (2006).


\bibitem{RA} Rolloos, F., and Arslan, M. ``Taylor-made volatility swaps,'' Wilmott, (2017).



\end{thebibliography}
\end{document}